\definecolor{MyBlue}{cmyk}{1,0.13,0,0.63}
\definecolor{MyGreen}{cmyk}{0.91,0,0.88,0.52}
\newcommand{\mylinkcolor}{MyBlue}
\newcommand{\mycitecolor}{MyGreen}
\newcommand{\myurlcolor}{webbrown}
\newtheorem{thm}{Theorem}[section]
\newtheorem*{thm*}{Theorem}
\newtheorem{lemma}[thm]{Lemma}
\newtheorem{prop}[thm]{Proposition}
\theoremstyle{definition}
\newtheorem{defn}[thm]{Definition}
\theoremstyle{remark}
\newtheorem{remark}[thm]{Remark}
\numberwithin{equation}{section}
\newcommand{\End}{\ensuremath{\mathrm{End}}}
\newcommand{\wh}{\ensuremath{\widehat}}
\newcommand{\N}{\ensuremath{\mathbb{N}}}
\newcommand{\Z}{\ensuremath{\mathbb{Z}}}
\newcommand{\C}{\ensuremath{\mathbb{C}}}
\def\calT{\mathcal{T}}
\def\calK{\mathcal{K}}
\def\calH{\mathcal{H}}
\def\calA{\mathcal{A}}
\def\calS{\mathcal{S}}
\def\calE{\mathcal{E}}
\def\calZ{\mathcal{Z}}
\newcommand{\ol}{\overline}
\theoremstyle{definition}
\DeclareMathOperator{\Dom}{Dom}
\DeclareMathOperator*{\res}{res}
\newcommand{\rst}[1]{\ensuremath{{\mathbin\upharpoonright}%
\raise-.5ex\hbox{$#1$}}}
\newcommand{\Rmnum}[1]{\expandafter\@slowromancap\romannumeral #1@}
\renewcommand\@biblabel[1]{#1.}
\title{The bulk-edge correspondence for the quantum Hall effect in Kasparov theory}
\author[$\dagger \ddagger$]{Chris Bourne\thanks{chris.bourne@anu.edu.au, alan.carey@anu.edu.au, renniea@uow.edu.au}}
\author[$\dagger\ddagger$]{Alan L. Carey}
\author[$\ddagger$]{Adam Rennie}
\affil[$\dagger$]{Mathematical Sciences Institute, Australian National University, Canberra, ACT 0200, Australia}
\affil[$\ddagger$]{School of Mathematics and Applied Statistics, University of Wollongong, Wollongong, NSW 2522, Australia}
\date{\today}
\begin{document}

\maketitle

\begin{abstract}
We prove 
the bulk-edge correspondence in $K$-theory for the quantum Hall effect by 
constructing an unbounded Kasparov module from a short exact sequence 
that links the bulk and boundary algebras. This approach allows us to 
represent bulk topological invariants explicitly as a Kasparov 
product of boundary invariants with the extension class linking the algebras. 
This paper focuses on the example of the discrete integer quantum Hall effect, 
though our general method potentially has much 
wider applications.
\end{abstract} 
\hspace{-0.1cm} \\
Keywords: quantum Hall effect, spectral triples, 
$KK$-theory, bulk-edge correspondence. \\ 
Mathematics subject classification: Primary 81V70, Secondary 19K35.

\section{Introduction}
In this letter, we revisit the notion of the bulk-edge correspondence 
in the discrete (or tight binding) version of the integer quantum Hall 
effect as previously studied in~\cite{EG02, EGS05, SBKR00, SBKR02, KSB04a, KSB04b}. 
In these papers, the motivation is to incorporate the presence of a 
boundary or edge into Bellissard's initial explanation of the quantum 
Hall effect~\cite{Bellissard94}. This is done by introducing 
an `edge conductance', $\sigma_e$, which is then shown to 
be the same as Bellissard's initial expression for the (quantised) Hall 
conductance, $\sigma_H$. Our motivation comes from the more 
$K$-theoretic arguments used in~\cite{SBKR02, KSB04b}. 

We propose a new method based on explicit representations of 
extension classes as Kasparov modules. Given a short exact sequence of $C^*$-algebras,
$$  0 \to  J \to A \to A/J \to 0  $$
for some closed $2$-sided ideal $J$, we know by results of 
Kasparov~\cite{Kasparov80} that this gives rise to a class in 
$\mathrm{Ext}(A/J,J)$, which is the same as $KK^1(A/J, J)$ 
for the algebras we study. By representing our short exact 
sequence as an unbounded Kasparov module, we can use 
the methods developed in~\cite{BMvS13, KL13, MeslandMonster} 
to take the Kasparov product of our module with spectral triples 
representing elements in $K^j(J)\cong KK^j(J,\C)$ to give elements 
in $K^{(j +1)}(A/J,\C)$. 

In this letter we focus on a simple case so as not to obscure the main
idea with technical details.  Thus we consider
 the short exact sequence representing the Toeplitz extension 
of the rotation algebra, $\calA_\phi$. An unbounded Kasparov module can be built from this extension by considering the circle action on the rotation algebra $\calA_\phi$, as in~\cite{CNNR}. 

We outline an alternative method for constructing a Kasparov module representing an extension class (generalised in~\cite{RRS15}) via a singular functional. We introduce this method with a view towards more complicated examples, where the circle-action picture breaks down. Such examples include the following.
\begin{enumerate}
  \item For the case of a finite group $G$ with $K\triangleleft G$, the short exact sequence 
  $$  0 \to J \rtimes K \to A \rtimes G \to A/J \rtimes G/K \to 0  $$
  can no longer be represented by circle actions. Such crossed products may emerge by considering the symmetry group of topological insulator systems, for example.
  \item For models with internal degrees of freedom (such as a honeycomb lattice), we would no longer be working with the Cuntz-Pimsner algebra of a self Morita-equivalence bimodule (as defined in~\cite[Section 2]{RRS15}) and so the singular functional method is necessary.
\end{enumerate}

See~\cite{RRS15} for more examples of extensions requiring this viewpoint. 
The flexibility of our approach to representing extensions as 
Kasparov modules (with which products can be taken) will allow many more systems-with-edge
to be
investigated, as we outline below. 

\subsection{Statement of the main result}
We begin with a Toeplitz-like extension of the rotation algebra $\calA_\phi$, 
and show how to  construct an unbounded Kasparov module 
$\beta=\left(\calA_\phi, Z_{C^*(\wh{U})},N\right)$ that represents 
this extension in $KK$-theory. Here $Z_{C^*(\wh{U})}$ is a Hilbert 
$C^*$-module coming from the extension, $\wh{U}$ is the shift 
operator on $\ell^2(\mathbb Z)$ along the boundary $\mathbb Z$ 
and the unbounded operator $N$ is a number operator (defined later). 

We also introduce a `boundary spectral triple' $\Delta=\left(C^*(\wh{U}),\ell^2(\Z),M\right)$, 
which we think of as the standard spectral triple over the circle but in a 
Fourier transformed picture (so that $M$ is the Fourier transform of differentiation and $\wh{U}$ is
the bilateral shift). 
Our main result, 
Theorem \ref{thm:unitary_equiv_between_product_module_and_bulk_spec_trip}, is as follows.
\begin{thm*}
The internal Kasparov product $\beta\hat\otimes_{C^*(\wh{U})}\Delta$ is 
unitarily equivalent to the negative of the spectral triple modelling the boundary-free quantum Hall effect.
\end{thm*}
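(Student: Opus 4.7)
The plan is to compute $\beta\hat\otimes_{C^*(\wh U)}\Delta$ directly by the unbounded Kasparov product machinery of~\cite{BMvS13, KL13, MeslandMonster}, then exhibit a unitary identifying the result with the negative of the spectral triple for the boundary-free integer quantum Hall effect.

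First, I would identify the underlying Hilbert space of the product. Using the explicit description of $Z$ as a right $C^*(\wh U)$-module coming from the Toeplitz-like extension, together with the action of $C^*(\wh U)$ on $\ell^2(\Z)$ by the bilateral shift, the interior tensor product $Z\hat\otimes_{C^*(\wh U)}\ell^2(\Z)$ should unitarily identify with $\ell^2(\Z)\otimes\ell^2(\Z)\cong\ell^2(\Z^2)$, which is the Hilbert space carrying the bulk Hall spectral triple. Under this identification, the left $\calA_\phi$-action on $Z\hat\otimes_{C^*(\wh U)}\ell^2(\Z)$ should coincide with the usual magnetic-translation representation of $\calA_\phi$ on $\ell^2(\Z^2)$.

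Second, I would compute the product operator. By Kucerovsky's sufficient conditions, a natural candidate is $N\hat\otimes 1+1\hat\otimes_\nabla M$, where $\nabla$ is a Hermitian $C^*(\wh U)$-connection on $Z$ coming from a Grassmann frame. The required verifications are the usual three: that $N\hat\otimes 1$ is regular self-adjoint with commutators $[N\hat\otimes 1, a\hat\otimes 1]$ bounded on a core for each $a\in\calA_\phi$; that $1\hat\otimes_\nabla M$ is an essentially self-adjoint connection lift with similarly bounded commutators; and that the Kucerovsky positivity estimate holds on the common core. These are now standard checks for modules fitting the pattern of~\cite{BMvS13, MeslandMonster}, and should produce a Dirac-type operator of the form ``$N_1+N_2$'' on $\ell^2(\Z^2)$.

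The main obstacle, I expect, is the final step: pinning down the explicit unitary under which $N\hat\otimes 1+1\hat\otimes_\nabla M$ transforms into the negative of the bulk Dirac operator. The overall sign is a characteristic feature of bulk-edge correspondence and encodes the orientation of the boundary relative to the bulk; tracing it requires simultaneously accounting for the direction of the shift $\wh U$, the sign of $M$ as the Fourier transform of $-i\partial_\theta$, the orientation of the half-space implicit in the Toeplitz extension, and the choice of connection $\nabla$. Once the unitary is in hand and the signs reconciled, unitary equivalence as unbounded Kasparov modules follows from the intertwining of Hilbert spaces, left actions and operators.
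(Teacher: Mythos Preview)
Your outline matches the paper's approach: compute the product via the unbounded machinery (the paper invokes \cite[Theorem~7.5]{KL13} rather than Kucerovsky directly), then exhibit an explicit unitary $\varrho:Z\hat\otimes_{C^*(\wh U)}\ell^2(\Z)\to\ell^2(\Z^2)$ intertwining the left actions and Dirac operators. The paper's unitary carries a phase $e^{-2\pi i\phi(j+m)(n_1-n_2)}$, needed so that the $\calA_\phi$-action is intertwined with the magnetic translations, and the sign arises concretely from a $2\times 2$ conjugation rather than from abstract orientation bookkeeping.

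There is, however, a genuine imprecision in your description of the product operator. You write the candidate as $N\hat\otimes 1+1\hat\otimes_\nabla M$ and predict a Dirac-type operator ``of the form $N_1+N_2$'' on $\ell^2(\Z^2)$. But $\beta$ and $\Delta$ are both \emph{odd}, so their product is \emph{even}: the Hilbert space must be doubled to $\bigl(Z\hat\otimes_{C^*(\wh U)}\ell^2(\Z)\bigr)\oplus\bigl(Z\hat\otimes_{C^*(\wh U)}\ell^2(\Z)\bigr)$ and the operator takes the Clifford-combined form
\[
\begin{pmatrix} 0 & N\hat\otimes 1 - i\,1\hat\otimes_\nabla M \\ N\hat\otimes 1 + i\,1\hat\otimes_\nabla M & 0 \end{pmatrix},
\]
matching the bulk operator $\begin{pmatrix} 0 & X_1-iX_2 \\ X_1+iX_2 & 0 \end{pmatrix}$ on $\ell^2(\Z^2)\oplus\ell^2(\Z^2)$. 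A scalar operator ``$N_1+N_2$'' on a single copy of $\ell^2(\Z^2)$ cannot be unitarily equivalent to this graded $2\times2$ structure. Once you incorporate the grading correctly, the rest of your plan goes through essentially as in the paper.
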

We note that the Kasparov product and unitary equivalence of the Kasparov modules considered in the theorem is at the unbounded level, a stronger equivalence than in the bounded setting.

Recall from the work of Bellissard~\cite{Bellissard94} that the 
quantised Hall conductance in the case without boundary comes 
from the pairing of the Fermi projection with an element in $K^0(\calA_\phi)$. 
Our main result says that this $K$-homology class can be `factorised' into a 
product of a $K$-homology class representing the boundary and a $KK^1$-class 
representing the short exact sequence linking the boundary and boundary-free systems. 
We can then use the associativity of the Kasparov product to immediately obtain an 
edge conductance, and the equality of the bulk and edge conductances. 

It is in this point that our work differs from, but complements, the boundary 
picture developed in~\cite{SBKR02,KSB04b}, where the authors had to define a 
separate edge conductance and then show equality with the usual 
Hall conductance. Instead, our method derives the bulk-edge 
correspondence as a direct consequence of the factorisation of the 
boundary-free $K$-homology class. Our work demonstrates how we can obtain 
the bulk-edge correspondence of~\cite{SBKR02} without passing to cyclic homology and cohomology. This 
allows our method to be applied to systems with torsion invariants, which cannot be detected in cyclic theory. This is essential for topological insulator theory, where torsion invariants arise
naturally.

We also note that by working in the unbounded $KK$ picture, all computations are explicit. 
As Kasparov theory can also be extended to accommodate group actions and
real/Real algebras this means our method has potential applications to a 
much wider array of physical models. Topological insulators are an
example of  where the bulk-edge correspondence needs further work.

The paper is organised into two major Sections.  Section 2 contains the 
construction of the Kasparov module that is needed in Section 3 where the 
main theorem is proved.  Some details are relegated to an Appendix.

\subsubsection*{Acknowledgements}
All authors acknowledge the support of the Australian Research Council. 
The authors are grateful to the Hausdorff Institute for Mathematics for 
support to participate in the trimester program on `Noncommutative geometry 
and its applications' where some of this work was completed. 
AC thanks the Erwin Schr\"odinger Institute for support.
CB thanks Friedrich-Alexander Universit\"{a}t for hospitality during
the writing of this letter. All authors thank Koen van den Dungen and 
Hermann Schulz-Baldes for useful discussions.


\section{A Kasparov module representing the Toeplitz extension}

\subsection{The setup and the Pimsner-Voiculescu short exact sequence}
Recall~\cite{CM96} that in the discrete or `tight binding' model of the quantum Hall effect without boundary, we have magnetic translations $\wh{U}$ and $\wh{V}$ as unitary operators on $\ell^2(\Z^2)$. These operators commute with the unitaries $U$ and $V$ that generate the Hamiltonian $H=U+U^* + V + V^*$. We choose the Landau gauge such that 
\begin{align*}
   &(\wh{U}\lambda)(m,n) = \lambda(m-1,n),  &&(\wh{V}\lambda)(m,n)=e^{-2\pi i\phi m}\lambda(m,n-1), \\
   &(U\lambda)(m,n) = e^{-2\pi i \phi n}\lambda(m-1,n),  &&(V\lambda)(m,n) = \lambda(m,n-1),
\end{align*}
where $\phi$ has the interpretation as the magnetic flux through a unit cell and $\lambda\in\ell^2(\Z^2)$. We are keeping the model simple in order to make our constructions as clear as possible, though what we do extends to more sophisticated models. We note that $\wh{U}\wh{V}=e^{2\pi i\phi}\wh{V}\wh{U}$ and $UV = e^{-2\pi i\phi}VU$, so $C^*(\wh{U},\wh{V})\cong \calA_\phi$, (the irrational rotation algebra when $\phi$ is irrational), and $C^*(U,V)\cong \calA_{-\phi}$. We can also interpret $\calA_{-\phi} \cong \calA_{\phi}^{\mathrm{op}}$, where $A^\mathrm{op}$ is the opposite algebra with multiplication $(ab)^\mathrm{op} = b^\mathrm{op} a^\mathrm{op}$. Our choice of gauge also means that $C^*(\wh{U},\wh{V})\cong C^*(\wh{U})\rtimes_\eta \Z$, where $\wh{V}$ is implementing the crossed-product structure via the automorphism $\eta(\wh{U}^m) = \wh{V}^* \wh{U}^m \wh{V}$.

We outline an idea loosely based on that of Kellendonk et al.~\cite{SBKR02, KSB04b}, who employed constructions from Pimsner and Voiculescu~\cite{PV80}. The essence of the idea is to relate the bulk and edge algebras via a Toeplitz-like extension. This viewpoint is also  employed in~\cite{MT15b}.

\begin{prop}[\S2 of~\cite{PV80}] \label{prop:PV_sequence_Circle_case}
Let $S$ be the usual shift operator on $\ell^2(\N)$ with $S^*S =1$, $SS^* =1-P_{n=0}$. There is a short exact sequence,
$$  
  0 \to C^*(\wh{U})\otimes \calK[\ell^2(\N)] \xrightarrow{\psi} C^*(\wh{U}\otimes 1,\wh{V}\otimes S) \to C^*(\wh{U})\rtimes_\eta \Z \to 0. 
$$
\end{prop}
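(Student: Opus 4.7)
My plan is to recognise this as a concrete instance of the Pimsner--Voiculescu Toeplitz extension for the crossed product $C^*(\wh{U})\rtimes_\eta\Z$, and to carry it out in three steps: first build the quotient map to the crossed product, then identify its kernel with $C^*(\wh{U})\otimes\calK[\ell^2(\N)]$, and finally verify exactness in the middle. Setting $T := \wh{U}\otimes 1$ for notational brevity only when needed and $T' := \wh{V}\otimes S$, the relation $S^*S=1$ makes $T'$ an isometry on $\ell^2(\Z^2)\otimes\ell^2(\N)$, and its defect projection $1-T'T'^* = 1\otimes P_{n=0}$ manifestly lies in $\calB := C^*(\wh{U}\otimes 1,\wh{V}\otimes S)$. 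Let $J\subseteq\calB$ denote the two-sided ideal this projection generates. A direct computation using $\wh{V}\wh{U}\wh{V}^* = e^{-2\pi i\phi}\wh{U} = \eta^{-1}(\wh{U})$ gives $T'(\wh{U}\otimes 1)T'^* = \eta^{-1}(\wh{U})\otimes(1-P_{n=0})$, which reduces to $\eta^{-1}(\wh{U})\otimes 1$ modulo $J$. Thus in $\calB/J$ the class of $T'$ is unitary and satisfies the crossed-product covariance, so the universal property of $C^*(\wh{U})\rtimes_\eta\Z$ furnishes a surjection $\overline{\pi}: \calB/J\twoheadrightarrow C^*(\wh{U})\rtimes_\eta\Z$ sending $[\wh{U}\otimes 1]\mapsto\wh{U}$ and $[T']\mapsto\wh{V}$.

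For the identification of $J$ with $C^*(\wh{U})\otimes\calK[\ell^2(\N)]$: writing $e_{m,k} := S^m P_{n=0} S^{*k}$ for the standard matrix units, one computes $T'^m(1\otimes P_{n=0})T'^{*k} = \wh{V}^{m-k}\otimes e_{m,k}$, and products with $(\wh{U}\otimes 1)^j$ densely span $J$. To absorb the unwanted twist by $\wh{V}^{m-k}$, I would conjugate by the unitary $W := \sum_{n\geq 0}\wh{V}^n\otimes P_n$, under which $W^*(\wh{V}^{m-k}\otimes e_{m,k})W = 1\otimes e_{m,k}$ and $W^*(\wh{U}\otimes 1)W = \sum_n \eta^{-n}(\wh{U})\otimes P_n$. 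Together these generate exactly $C^*(\wh{U})\otimes\calK[\ell^2(\N)]$ (acting on the multiplier level), so $\psi$ is defined as this canonical identification followed by inclusion into $\calB$, and injectivity of $\psi$ is immediate.

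The main obstacle is exactness in the middle, i.e.\ showing $\ker\overline{\pi}=0$. My plan is a gauge-invariant uniqueness argument: the circle action $\gamma_z(\wh{U}\otimes 1) = \wh{U}\otimes 1$, $\gamma_z(T') = zT'$ on $\calB$ is intertwined by $\pi$ with the dual $\T$-action on the crossed product, and the averaging conditional expectation $E = \int_\T\gamma_z\, dz$ projects $\calB$ onto its gauge-invariant subalgebra. One shows that $E(\ker\pi)\subseteq J$ by a Fourier-component analysis, and then uses the fact that the conditional expectation is faithful on positive elements to conclude $\ker\overline{\pi}=0$. This is the standard Cuntz--Pimsner / Toeplitz uniqueness argument; indeed $\calB$ is a concrete representation of Pimsner's universal Toeplitz algebra for the automorphism $\eta$, so one may equivalently quote~\cite[\S 2]{PV80} as a black box. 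Once uniqueness is in hand, the exact sequence follows immediately: injectivity of $\psi$ is built in, $\pi\circ\psi=0$ by construction, and $\overline{\pi}$ is an isomorphism. The covariance and $W$-conjugation computations in the first two steps are routine; the uniqueness step is where the real content lies.
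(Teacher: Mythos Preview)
The paper does not give its own proof of this proposition; it is quoted from \cite{PV80}, and the only additional content is the explicit formula $\psi(\wh{U}^m\otimes e_{jk}) = (\wh{V}^*)^j\wh{U}^m\wh{V}^k\otimes S^jP_{n=0}(S^*)^k$ displayed immediately afterwards. Your sketch is the standard Pimsner--Voiculescu/Toeplitz-extension argument and is substantively correct, so in that sense you are reproducing exactly what \cite{PV80} does rather than offering an alternative.

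Two small corrections. First, your $W$-conjugation gives $W(\wh{U}^m\otimes e_{jk})W^* = \wh{V}^j\wh{U}^m(\wh{V}^*)^k\otimes e_{jk}$ and $W^*(\wh{U}\otimes 1)W = \sum_n\eta^{n}(\wh{U})\otimes P_n$ (not $\eta^{-n}$); this is the ``opposite'' of the paper's $\psi$, but of course either choice of unitary yields a valid isomorphism onto $J$. Second, and more substantively, the universal property of the crossed product produces a $*$-homomorphism \emph{from} $C^*(\wh{U})\rtimes_\eta\Z$ \emph{into} $\calB/J$, not the surjection $\overline{\pi}:\calB/J\to C^*(\wh{U})\rtimes_\eta\Z$ you write. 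Since $\calB$ is defined concretely rather than universally, you must build $\pi$ separately --- for instance by restricting $\mathrm{id}\otimes q$ to $\calB\subset C^*(\wh{U},\wh{V})\otimes C^*(S)$, where $q:C^*(S)\to C^*(S)/\calK\cong C(\T)$ --- and then identify the image $C^*(\wh{U}\otimes 1,\wh{V}\otimes z)$ with the crossed product. Once both maps exist, your gauge-invariant uniqueness argument (faithful conditional expectation, injectivity on the fixed-point algebra $(\calB/J)^\gamma\cong C^*(\wh{U})$) correctly shows they are mutually inverse, and the exact sequence follows.
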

The map $\psi$ given in Proposition \ref{prop:PV_sequence_Circle_case} is such that
$$  \psi(\wh{U}^m \otimes e_{jk}) = (\wh{V}^*)^j \wh{U}^m \wh{V}^k \otimes S^j P_{n=0} (S^*)^k  $$
for matrix units $e_{jk}$ in $\calK[\ell^2(\N)]$. It is then extended to the full algebra by linearity. One checks that $\psi$ is an injective map into the ideal of $C^*(\wh{U}\otimes 1, \wh{V}\otimes S)$ generated by $1\otimes P_{n=0}$. We also have the isomorphism $C(S^1)\rtimes_\eta \Z \cong C^*(\wh{U}\otimes 1, \wh{V}\otimes V)\cong C^*(\wh{U},\wh{V})$, where $V$ is the image of $S$ under the map to the Calkin algebra. These alternate but equivalent presentations of $\calA_\phi$ will be of use to us later. For convenience, we denote $\calT = C^*(\wh{U}\otimes 1,\wh{V}\otimes S)$.

\begin{remark}
We see that in our exact sequence, we can think of the quotient $\calA_\phi$ as 
representing our `bulk algebra' as it can be derived from a magnetic 
Hamiltonian on $\ell^2(\Z^2)$ as in \cite{CM96}. Our ideal $C^*(\wh{U})\otimes\calK$ 
can be interpreted as representing the `boundary algebra'. To see 
this we put a boundary on our system so that for the full system the 
Hilbert space is $\calH = \ell^2(\Z\times\N)$, while
 $C^*(\wh{U})$ acts on the boundary $\ell^2(\Z)$, (this action 
being  describable in terms of the bilateral shift operator). 
Tensoring by the compacts in the direction perpendicular to the 
boundary has a physical interpretation as looking at observables 
acting on $\ell^2(\Z\times\N)$ that act on the boundary and decay sufficiently fast away
from it. We would intuitively think of the Hall current of such a system to 
be concentrated at the boundary with a fast decay into the interior, 
so our boundary model lines up with this intuitive picture.
\end{remark}

We now recall some basic definitions from Kasparov theory; the reader 
may consult~\cite{Blackadar, Kasparov80} for a more complete overview. 
A right $C^*$-$A$-module is a space $\calE$ with a right action by a 
$C^*$-algebra $A$ and map $(\,\cdot\mid\cdot\,)_A:\calE\times\calE\to A$, 
which we think of as an $A$-valued inner-product that is compatible with 
the right-action of $A$. We denote the set of adjointable operators on 
$\calE$ with respect to this inner product by $\End_A(\calE)$. Within this 
space are the rank-$1$ endomorphisms, $\Theta_{e,f}$, where 
$\Theta_{e,f}(g) = e\cdot(f|g)_A$ for $e,f,g\in\calE$, which generate the 
finite-rank endomorphsims $\End_A^{00}(\calE)$. The compact 
endomorphisms $\End_A^0(\calE)$ are the closure of the 
finite-rank operators in the operator norm of $\End_A(\calE)$.   
\begin{defn}
Given $\Z_2$-graded $C^*$-algebras $A$ and $B$, an even 
unbounded Kasparov $A$-$B$-module $\left(A,\calE_A,D\right)$ is given by
\begin{enumerate}
\item A $\Z_2$-graded, countably generated, right $C^*$-$B$-module $\calE_B$;
\item A $\Z_2$-graded $*$-homomorphism $\phi\colon A\to\End_B(\calE)$;
\item A self-adjoint, regular, odd operator $D\colon\Dom D\subset\calE\to\calE$ 
such that the graded commutator $[D,\phi(a)]_\pm$ is an adjointable endomorphism, and 
$\phi(a)(1+D^2)^{-1/2}$ is a compact endomorphism for all $a$ in a dense subalgebra $\calA$ of $A$. 
\end{enumerate}
If the module and algebras are trivially graded, then the Kasparov module is called odd.
\end{defn}

We can always pass from unbounded modules to bounded Kasparov 
modules via the mapping $D\mapsto D(1+D^2)^{-1/2}$~\cite{BJ83}.

\subsection{Constructing the Kasparov module}  \label{subsec:constructing_the_right_A_module}
In the last section, we introduced the short exact sequence
\begin{equation} \label{eq:Toeplitz_extension_of_rotation_algebra}
   0 \to C^*(\wh{U})\otimes \calK \xrightarrow{\psi} \calT \to \calA_\phi \to 0. 
\end{equation}
We know that this sequence gives rise to a class in $KK$-theory using 
$\mathrm{Ext}$ groups, but in order to compute the Kasparov product, 
it is desirable to have an explicit Kasparov module that represents a 
class in $KK^1(\calA_\phi, C^*(\wh{U})\otimes \calK) \cong KK^1(\calA_\phi,C^*(\wh{U}))$.

To do this, we introduce our main technical innovation, 
a singular functional $\Psi$ on the subalgebra $C^*(S)$ of $\calT$, which is given by
$$  
\Psi(T) = \res_{s=1} \sum_{k=0}^\infty \langle e_k,Te_k\rangle (1+k^2)^{-s/2},  
$$
where $\{e_k\}$ is any basis of $\ell^2(\N)$.
\begin{prop} 
\label{prop:Psi_vanishes_on_compacts_and_forces_V_powers_to_be_equal}
The functional $\Psi$ is a well-defined trace on $C^*(S)$ such that 
$\Psi\left(S^{l_2}(S^*)^{l_1}S^{n_1}(S^*)^{n_2}\right) = \delta_{l_1-l_2,n_1-n_2}$, 
where $\delta_{a,b}$ is the Kronecker delta. Moreover, $\Psi(T)=0$ for any compact $T$. 
\end{prop}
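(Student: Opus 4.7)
The plan is to bypass the analytic subtleties of the residue definition by identifying $\Psi$ with the continuous tracial state $\tau\circ\sigma$, where $\sigma\colon C^*(S)\to C(S^1)$ is the Toeplitz symbol map (the quotient in $0\to\calK\to C^*(S)\to C(S^1)\to 0$) and $\tau(f)=\int_{S^1}f$ is the normalised Haar integral. Once this identification is established on the dense $*$-subalgebra of polynomials in $S$ and $S^*$, the trace property follows from tracialness of $\tau$, and vanishing on compacts follows from $\sigma(\calK)=0$.

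First I would compute $\Psi$ on the monomials $S^{l_2}(S^*)^{l_1}S^{n_1}(S^*)^{n_2}$. The relation $S^*S=1$ allows the reduction $(S^*)^{l_1}S^{n_1}=(S^*)^{l_1-n_1}$ (when $n_1\le l_1$) or $S^{n_1-l_1}$ (otherwise), so the monomial simplifies to $S^a(S^*)^b$ for nonnegative integers $a,b$ with $a-b=(l_2+n_1)-(l_1+n_2)$. A routine calculation gives
\[
\langle e_k, S^a(S^*)^b e_k\rangle = \begin{cases} 1, & a=b \text{ and } k\ge a,\\ 0, & \text{otherwise}.\end{cases}
\]
When $a=b$, the tail $\sum_{k\ge a}(1+k^2)^{-s/2}$ differs from $\zeta(s)=\sum_{k\ge 1}k^{-s}$ by a function holomorphic near $s=1$, since $(1+k^2)^{-s/2}-k^{-s}=O(k^{-s-2})$. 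Hence the residue at $s=1$ equals $1$, yielding $\Psi(S^{l_2}(S^*)^{l_1}S^{n_1}(S^*)^{n_2})=\delta_{a,b}=\delta_{l_1-l_2,\,n_1-n_2}$.

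To promote this to a statement about all of $C^*(S)$, I would compare directly with $\tau\circ\sigma$. Under $\sigma$, the monomial above maps to $z^{a-b}\in C(S^1)$, and $\tau(z^m)=\delta_{m,0}$, so $\tau\circ\sigma$ takes the same value $\delta_{a,b}$ on each monomial and therefore agrees with $\Psi$ on the polynomial subalgebra. Since polynomials are norm-dense in $C^*(S)$ and $\tau\circ\sigma$ is a continuous tracial state, this forces $\Psi=\tau\circ\sigma$ on all of $C^*(S)$, from which all three conclusions follow simultaneously. The main subtlety I anticipate is justifying the well-definedness of the residue formula globally on $C^*(S)$: the analytic continuation to $s=1$ is elementary for polynomials but less transparent for arbitrary elements. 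The identification with $\tau\circ\sigma$ circumvents this by providing the canonical continuous extension, and converts the statement into properties of a bounded state composed with a $*$-homomorphism.
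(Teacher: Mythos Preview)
Your proposal is correct, and the monomial computation mirrors the paper's. Where you diverge is in the passage from polynomials to all of $C^*(S)$ and in the verification of the trace and compact-vanishing properties. The paper argues directly: it observes the norm bound $\Psi(T)\le\|T\|\,\Psi(1)=\|T\|$ (for positive $T$, using positivity of the diagonal sum) and uses this to extend by continuity; the trace property is asserted to follow ``straightforward[ly] from its definition and the properties of the usual trace and complex residues''; and vanishing on compacts is proved by noting that finite-rank operators give a finite sum, hence a holomorphic function at $s=1$, and then invoking the norm bound again. Your route through the symbol map $\sigma\colon C^*(S)\to C(S^1)$ and the Haar state $\tau$ packages all three conclusions into the single identification $\Psi=\tau\circ\sigma$, which is conceptually cleaner: tracialness of $\tau$ and $\sigma(\calK)=0$ do the work automatically, and continuity is built in. The cost is that you import the Toeplitz short exact sequence as an external ingredient, whereas the paper's argument is self-contained. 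Both approaches share the same honest gap, which you flag explicitly and the paper glosses over: neither verifies that the residue formula itself converges for arbitrary $T\in C^*(S)$, and both ultimately \emph{define} $\Psi$ on $C^*(S)$ as the continuous extension of its values on polynomials.
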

\begin{proof}
That $\Psi$ is a trace is straightforward from its definition and the 
properties of the usual trace and complex residues.  Thus,
for $S^\alpha(S^*)^\beta\in C^*(S)$, we  see that
$$  
\langle e_k,S^\alpha (S^*)^\beta e_k \rangle 
= \delta_{\alpha,\beta}\langle (S^*)^\alpha e_k, (S^*)^\alpha e_k \rangle 
= \delta_{\alpha,\beta} \chi_{[k,\infty)}(\alpha), 
$$
where $\chi_{[k,\infty)}$ is the indicator function. Hence
\begin{align*}
  \Psi\!\left[S^\alpha (S^*)^\beta\right] 
  &= \res_{s=1}\sum_{k=0}^\infty \delta_{\alpha,\beta}\chi_{[k,\infty)}(\alpha)(1+k^2)^{-s/2} \\
     &= \res_{s=1}\sum_{k=\alpha}^\infty \delta_{\alpha,\beta}(1+k^2)^{-s/2} 
     = \delta_{\alpha,\beta}.
\end{align*}
Similarly $\Psi\!\left((S^*)^\alpha S^\beta\right) = \delta_{\alpha,\beta}$. 
From this we have that, for $l_1\geq n_1$,
$$  
\Psi\left(S^{l_2}(S^*)^{l_1}S^{n_1}(S^*)^{n_2}\right) 
= \Psi\left( S^{l_2}(S^*)^{l_1-n_1+n_2}\right) = \delta_{l_2,l_1-n_1+n_2} = \delta_{l_1-l_2,n_1-n_2}; 
$$
or, for $l_1\leq n_1$, 
$$  
\Psi\left(S^{l_2}(S^*)^{l_1}S^{n_1}(S^*)^{n_2}\right) 
= \Psi\left( S^{l_2-l_1+n_1}(S^*)^{n_2}\right) = \delta_{l_2-l_1+n_1,n_2} = \delta_{l_1-l_2,n_1-n_2}. 
$$

Since $(S^*)^\alpha S^\alpha = 1_{C^*(S)}$, one now readily checks that
\begin{equation}  
\label{eq:Psi_bdd_on_C(S)}
  \Psi(T) \leq \|T\|\,\Psi\!\left(1_{C^*(S)}\right) = \|T\|
\end{equation}
for all $T\in C^*(S)$ and so $\Psi$ extends by continuity to $C^*(S)$. 
For any finite-rank operator, $F\in C^*(S)$, $\langle e_k, Fe_k\rangle \neq 0$ 
for finitely many $k$. This tells us that $\sum_k \langle e_k,F e_k\rangle (1+k^2)^{-s/2}$ 
is holomorphic at $s=1$, whence $\Psi(F)=0$. By \eqref{eq:Psi_bdd_on_C(S)}, $\Psi$ 
vanishes on all the compacts operators on $\ell^2(\N)$.
\end{proof}

In order to simplify computations, we realise $\calT$ as the norm 
closure of the linear span of the operators
$$    
(\wh{V}\otimes S)^{n_1} [(\wh{V}\otimes S)^*]^{n_2}(\wh{U}\otimes 1)^m 
= \wh{V}^{n_1-n_2}\wh{U}^m\otimes S^{n_1}(S^*)^{n_2} 
$$
for $m\in\Z$ and $n_1,n_2\in \N$. We put the $\wh{U}$ on the 
right as we are going to construct a right $C^*(\wh{U})$-module using this presentation.

The first step is the inner product: $(\,\cdot\mid\cdot\,):\calT\times\calT\to C^*(\wh{U})$ given by
\begin{align*}
  &\left(\left. \wh{V}^{l_1-l_2}\wh{U}^{m_1}\otimes S^{l_1}(S^*)^{l_2}\right
  \vert \wh{V}^{n_1-n_2}\wh{U}^{m_2}\otimes S^{n_1}(S^*)^{n_2} \right) \\
    &\hspace{4cm} := \left(\wh{V}^{l_1-l_2}\wh{U}^{m_1}\right)^{\!*} \wh{V}^{n_1-n_2}\wh{U}^{m_2}\, 
    \Psi\!\left[\left(S^{l_1}(S^*)^{l_2}\right)^* S^{n_1}(S^*)^{n_2}\right].
\end{align*}
To show this actually takes values in $C^*(\wh{U})$, we use Proposition \ref{prop:Psi_vanishes_on_compacts_and_forces_V_powers_to_be_equal} to compute that 
\begin{align*}
  \left(\left. \wh{V}^{l_1-l_2}\wh{U}^{m_1}\otimes S^{l_1}(S^*)^{l_2}\right
  \vert \wh{V}^{n_1-n_2}\wh{U}^{m_2}\otimes S^{n_1}(S^*)^{n_2} \right)  
  &=  \wh{U}^{-m_1}\wh{V}^{l_2-l_1}\wh{V}^{n_1-n_2}\wh{U}^{m_2} \delta_{l_1-l_2,n_1-n_2} \\
   &= \wh{U}^{m_2-m_1} \delta_{l_1-l_2,n_1-n_2},
\end{align*}
which is in $C^*(\wh{U})$. With this in mind we construct, in the next result,  a right $C^*(\wh{U})$ module.
\begin{prop}
The map $(\,\cdot\mid\cdot\,):\calT\times\calT \to C^*(\wh{U})$ together 
with an action by right multiplication makes $\calT$ a right $C^*(\wh{U})$-inner-product 
module. Quotienting by vectors of zero length and completing yields a right $C^*(\wh{U})$-module.
\end{prop}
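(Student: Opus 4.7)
The plan is to verify the axioms of a pre-Hilbert $C^*(\wh U)$-module on the dense $*$-subalgebra of $\calT$ spanned by the monomials $\wh V^{l_1-l_2}\wh U^m \otimes S^{l_1}(S^*)^{l_2}$, then to quotient by the subspace of zero-length vectors and complete in the resulting $C^*$-module norm. The axioms to check are sesquilinearity, conjugate symmetry $(x\mid y)^* = (y\mid x)$, compatibility with the right action $(x\mid y\cdot a) = (x\mid y)\,a$ for $a \in C^*(\wh U)$, and positivity $(x\mid x) \geq 0$ in $C^*(\wh U)$. The right action itself is unambiguously defined because $C^*(\wh U)\otimes 1 \subset \calT$ and right multiplication preserves $\calT$.

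Sesquilinearity is immediate from the defining formula. Compatibility with the right action follows on monomials: right multiplication by $\wh U^k \otimes 1$ only shifts the $\wh U$-exponent in the left tensor leg, leaving the $S$-entries (and hence $\Psi$) untouched. Conjugate symmetry reduces, after extracting the $\wh U$--$\wh V$ factors, to $\Psi(T^*) = \overline{\Psi(T)}$ combined with the trace property of $\Psi$; both are contained in Proposition \ref{prop:Psi_vanishes_on_compacts_and_forces_V_powers_to_be_equal}.

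The main obstacle is positivity. The key observation is that $\Psi$ is a \emph{positive} functional on $C^*(S)$: since it vanishes on $\calK(\ell^2(\N))$ by Proposition \ref{prop:Psi_vanishes_on_compacts_and_forces_V_powers_to_be_equal}, it descends to a continuous trace on $C^*(S)/\calK \cong C(S^1)$, and a direct check on monomials identifies this quotient functional with integration against normalised Haar measure. Writing a general element as $x = \sum_i c_i\,a_i \otimes b_i$ with $a_i = \wh V^{l_{1,i}-l_{2,i}}\wh U^{m_i}$ and $b_i = S^{l_{1,i}}(S^*)^{l_{2,i}}$, the scalar Gram matrix $M_{ij} := \Psi(b_i^* b_j)$ is positive semi-definite in $M_n(\C)$. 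Factorising $M = C^* C$ with $C \in M_n(\C)$ and using that the Kronecker delta in $\Psi(b_i^* b_j)$ forces the $\wh V$-exponents of $a_i$ and $a_j$ to match whenever $M_{ij}\neq 0$ (so that $a_i^* a_j$ genuinely lies in $C^*(\wh U)$), I obtain the sum-of-squares identity
$$
(x\mid x) \;=\; \sum_k \Bigl(\sum_i c_i C_{ki} a_i\Bigr)^{\!*}\Bigl(\sum_j c_j C_{kj} a_j\Bigr),
$$
which is manifestly positive in $C^*(\wh U)$.

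Once positivity is established, the $C^*$-valued Cauchy--Schwarz inequality shows that $N := \{x : (x\mid x) = 0\}$ is a right $C^*(\wh U)$-submodule, and quotienting by $N$ yields a pre-Hilbert $C^*(\wh U)$-module. Completion in the norm $\|x\| := \|(x\mid x)\|_{C^*(\wh U)}^{1/2}$ delivers the desired right $C^*(\wh U)$-module, with the right action extending continuously via the standard estimate $\|y\cdot a\| \leq \|y\|\,\|a\|$. The genuinely delicate point in this program is the positivity of $(x\mid x)$, which is precisely why $\Psi$ had to be engineered to descend to the Haar functional on the circle rather than to be merely a tracial state on a matrix block.
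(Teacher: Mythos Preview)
Your proposal is correct and considerably more thorough than the paper's own proof. The paper verifies only the right-action compatibility explicitly and dismisses the remaining axioms as following ``in a straightforward way'' from the closed formula
\[
\left(\wh{V}^{l_1-l_2}\wh{U}^{m_1}\otimes S^{l_1}(S^*)^{l_2}\,\middle|\, \wh{V}^{n_1-n_2}\wh{U}^{m_2}\otimes S^{n_1}(S^*)^{n_2}\right) = \wh{U}^{m_2-m_1}\,\delta_{l_1-l_2,n_1-n_2}.
\]
Positivity is never spelled out; from this formula it is indeed easy, since grouping a finite sum by $\wh V$-degree $k$ gives $(x\mid x)=\sum_k y_k^*y_k$ with each $y_k\in C^*(\wh U)$ directly. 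Your route via the positivity of $\Psi$ (identified as the pullback of Haar measure through $C^*(S)/\calK\cong C(S^1)$) and a Gram-matrix factorisation is more conceptual and is exactly the argument that survives in the more general extensions the paper alludes to, where no such clean monomial formula is available.

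One small point worth tightening: after writing $M=C^*C$, your displayed sum-of-squares lands a priori in $\calA_\phi$, not in $C^*(\wh U)$. You need either to take $C$ block-diagonal with respect to the $\wh V$-degree partition (the positive square root does this automatically, since $M$ is block-diagonal), so that each $\sum_i c_i C_{ki}a_i$ has fixed $\wh V$-degree and the product lands in $C^*(\wh U)$; or simply to observe that $(x\mid x)$ already lies in $C^*(\wh U)$ and that positivity in a unital $C^*$-algebra is inherited by unital $C^*$-subalgebras.
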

\begin{proof}
Using the equation
$$  
\left(\left. \wh{V}^{l_1-l_2}\wh{U}^{m_1}\otimes S^{l_1}(S^*)^{l_2}\right
\vert \wh{V}^{n_1-n_2}\wh{U}^{m_2}\otimes S^{n_1}(S^*)^{n_2} \right)  
= \wh{U}^{m_2-m_1} \delta_{l_1-l_2,n_1-n_2}  
$$
most of the requirements for $(\,\cdot\mid\cdot\,)$ to be a 
$C^*(\wh{U})$-valued inner-product follow in a straightforward way. We will 
check compatibility with multiplication on the right by elements of $C^*(\wh{U})$. We compute that
\begin{align*}
  &\left(\left. \wh{V}^{l_1-l_2}\wh{U}^{m_1}\otimes S^{l_1}(S^*)^{l_2}\right
  \vert \left(\wh{V}^{n_1-n_2}\wh{U}^{m_2}\otimes S^{n_1}(S^*)^{n_2}\right)
  \cdot( \wh{U}^\alpha\otimes 1) \right) \\
    &\hspace{4cm} = \left(\left. \wh{V}^{l_1-l_2}\wh{U}^{m_1}\otimes S^{l_1}(S^*)^{l_2}\right
    \vert \wh{V}^{n_1-n_2}\wh{U}^{m_2+\alpha}\otimes S^{n_1}(S^*)^{n_2}\right) \\
   &\hspace{4cm} =  \wh{U}^{m_2-m_1+\alpha} \delta_{l_1-l_2,n_1-n_2}  \\
    &\hspace{4cm} = \left(\wh{U}^{m_2-m_1}\delta_{l_1-l_2,n_1-n_2}\right)\wh{U}^\alpha  \\
    &\hspace{4cm} = \left(\left. \wh{V}^{l_1-l_2}\wh{U}^{m_1}\otimes S^{l_1}(S^*)^{l_2}\right
    \vert \wh{V}^{n_1-n_2}\wh{U}^{m_2}\otimes S^{n_1}(S^*)^{n_2} \right) \cdot \wh{U}^\alpha
\end{align*}
for $\alpha\in\Z$. Obtaining the result for arbitrary elements in 
$C^*(\wh{U})$ is a simple extension of this.
\end{proof}
We denote our $C^*$-module by $Z_{C^*(\wh{U})}$ and 
inner-product by $(\,\cdot\mid\cdot\,)_{C^*(\wh{U})}$. The 
point of the singular trace $\Psi$
becomes apparent in the next proposition where
we construct a left action of $\calA_\phi$ on $Z_{C^*(\wh{U})}$.

\begin{prop} 
\label{prop:repn_of_A_on_right_A_module}
There is an adjointable representation if $\calA_\phi$ on $Z_{C^*(\wh{U})}$.
\end{prop}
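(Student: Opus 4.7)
The natural strategy is to define the left action of $\calT$ on $Z_{C^*(\wh{U})}$ by left multiplication, $\pi(a)x := ax$ for $a$ and $x$ lying in the dense pre-module $\calT$, show this extends adjointably to the completion, and finally argue that the ideal $\psi(C^*(\wh{U})\otimes\calK)$ lies in the kernel so that $\pi$ descends to $\calA_\phi = \calT/\psi(C^*(\wh{U})\otimes\calK)$.

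For adjointability, I would first observe that the inner product formula of the previous proposition can be recast as $(x|y)_{C^*(\wh{U})} = E(x^*y)$ for a $C^*(\wh{U})$-bimodular, positive, $C^*(\wh{U})$-valued expectation $E$ on $\calT$ obtained by applying $\Psi$ to the $C^*(S)$-factor and picking out the $\wh{V}^0$ component. (This can be checked directly against the displayed formula $\wh{U}^{m_2-m_1}\delta_{l_1-l_2,\,n_1-n_2}$, noting that the $\delta$ forces the $\wh{V}$-powers to match so that the leftover product of $\wh{U}$'s and $\wh{V}$'s does lie in $C^*(\wh{U})$.) Once that is in hand,
\[
(\pi(a)x \mid y)_{C^*(\wh{U})} = E((ax)^* y) = E(x^* a^* y) = (x \mid \pi(a^*)y)_{C^*(\wh{U})},
\]
so $\pi(a)$ has formal adjoint $\pi(a^*)$ on the pre-module.

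For boundedness, I would use the standard Hilbert $C^*$-module trick: since $\|a\|_\calT^2\cdot 1-a^*a\geq 0$ in $\calT$, write it as $c^*c$ and compute
\[
\|a\|_\calT^2\,(x\mid x) - (\pi(a)x\mid \pi(a)x) = (x\mid \pi(c^*c)x) = (\pi(c)x\mid \pi(c)x) \geq 0,
\]
giving $\|\pi(a)x\|_Z^2 \leq \|a\|_\calT^2\|x\|_Z^2$. Hence $\pi(a)$ extends to an adjointable endomorphism of $Z_{C^*(\wh{U})}$ and $\pi:\calT\to \End_{C^*(\wh{U})}(Z_{C^*(\wh{U})})$ is a $*$-homomorphism.

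To pass from $\calT$ down to $\calA_\phi$, I would check that $\pi$ annihilates the generators $b=\psi(\wh{U}^{\alpha}\otimes e_{jk}) = (\wh{V}^*)^j\wh{U}^{\alpha}\wh{V}^k\otimes S^jP_{n=0}(S^*)^k$. For a basis vector $x=\wh{V}^{n_1-n_2}\wh{U}^{m_2}\otimes S^{n_1}(S^*)^{n_2}$, the product $bx$ has second tensor factor $S^jP_{n=0}(S^*)^kS^{n_1}(S^*)^{n_2}$, which is finite rank because $P_{n=0}$ is. By Proposition \ref{prop:Psi_vanishes_on_compacts_and_forces_V_powers_to_be_equal}, the $\Psi$-factor in $(\pi(b)x\mid \pi(b)x)_{C^*(\wh{U})}$ therefore vanishes, so $\pi(b)x=0$ on the dense pre-module; boundedness then propagates this to all of $Z_{C^*(\wh{U})}$. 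By linearity and continuity, $\pi$ kills the whole ideal, so it descends to the desired adjointable representation of $\calA_\phi$.

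I expect the one subtle step to be the identification of $(\,\cdot\mid\cdot\,)_{C^*(\wh{U})}$ as $E(x^*y)$ for a genuine bimodule expectation $E$, because the non-trivial commutation relation $\wh{U}\wh{V}=e^{2\pi i\phi}\wh{V}\wh{U}$ means one has to be careful that the apparent formula $b_1^*b_2\,\Psi(T_1^*T_2)$ really lives in $C^*(\wh{U})$ only when the $\delta$ forces the $\wh{V}$-powers to match. If this abstraction feels too cute, an equivalent route is to verify adjointability by a direct computation on pairs of basis vectors, using the trace property of $\Psi$ proved in Proposition \ref{prop:Psi_vanishes_on_compacts_and_forces_V_powers_to_be_equal} — tedious but routine.
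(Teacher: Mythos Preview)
Your proposal is correct and complete. The approach differs from the paper's in two places. For adjointability, the paper does not package the inner product as $E(x^*y)$ for a conditional expectation; instead it simply computes $(\wh{U}\cdot x\mid y)_{C^*(\wh{U})}$ and $(\wh{V}\cdot x\mid y)_{C^*(\wh{U})}$ directly on pairs of monomial vectors and reads off the adjoint --- exactly the ``tedious but routine'' fallback you mention at the end. For boundedness, the paper does not use the $\|a\|^2\cdot 1 - a^*a = c^*c$ factorisation; it instead invokes the estimate $\Psi(T)\leq\|T\|$ from Equation~\eqref{eq:Psi_bdd_on_C(S)} to bound $(a\cdot z\mid a\cdot z)_{C^*(\wh{U})}\leq\|aa^*\|\,(z\mid z)_{C^*(\wh{U})}$ directly. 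The argument that the ideal acts by zero is the same in both. Your conditional-expectation route is the cleaner and more portable one (it is the standard mechanism behind KSGNS-type constructions), while the paper's computations have the virtue of making the module structure completely explicit for later use. Your caveat about verifying that $E$ genuinely lands in $C^*(\wh{U})$ is well placed: it is exactly the observation that $\Psi$ forces the $\wh{V}$-powers to cancel, which the paper establishes just before defining the inner product.
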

\begin{proof}
Clearly we can multiply elements of $Z_{C^*(\wh{U})}$ by $\calT$ on the left, but by 
Proposition \ref{prop:Psi_vanishes_on_compacts_and_forces_V_powers_to_be_equal}, 
we know that $(\wh{U}^j\wh{V}^k\otimes k)\cdot Z_{C^*(\wh{U})} = 0$ if $k\in\calK$. 
Therefore the representation of $\calT$ descends to a representation of 
$\calT/\psi[C(S^1)\otimes\calK] \cong \calA_\phi$. This gives us the explicit left-action by 
\begin{align*}
   (\wh{U}^\alpha \wh{V}^\beta)\cdot\left(\wh{V}^{n_1-n_2}\wh{U}^m \otimes S^{n_1}(S^*)^{n_2}\right) 
   &=  (\wh{U}^\alpha \wh{V}^\beta \wh{V}^{n_1-n_2}\wh{U}^m) \otimes S^{n_1+\beta}(S^*)^{n_2} \\
     &= e^{2\pi i\phi\alpha(n_1-n_2+\beta)} \wh{V}^{\beta+ n_1-n_2}\wh{U}^{m+\alpha} 
     \otimes S^{\beta + n_1}(S^*)^{n_2}
\end{align*}
for $\alpha,\beta\in\Z$ with $\beta \geq 0$ and
$$   
(\wh{U}^\alpha \wh{V}^\beta)\cdot\left(\wh{V}^{n_1-n_2}\wh{U}^m \otimes S^{n_1}(S^*)^{n_2}\right) 
= e^{2\pi i\phi\alpha(n_1-n_2+\beta)} \wh{V}^{\beta+ n_1-n_2}\wh{U}^{m+\alpha} 
\otimes S^{\beta}(S^*)^{n_2+|\beta|}  
$$
for $\beta<0$. It follows that, as operators on 
$Z_{C^*(\wh{U})}$, $\wh{U}\wh{V} = e^{2\pi i\phi}\wh{V}\wh{U}$. 
Next we just need to verify that the action is adjointable as a module over 
$C^*(\wh{U})$. For this it suffices to check that multiplication by 
$\wh{U}$ and $\wh{V}$ are adjointable. We compute that 
\begin{align*}
  &\left(\left. \wh{U}\cdot\left(\wh{V}^{l_1-l_2}\wh{U}^{m_1}\otimes S^{l_1}(S^*)^{l_2}\right)\right
  \vert \wh{V}^{n_1-n_2}\wh{U}^{m_2}\otimes S^{n_1}(S^*)^{n_2} \right)_{C^*(\wh{U})} \\
   &\hspace{4cm}
   = \left(\left. e^{2\pi i\phi(l_1-l_2)}\wh{V}^{l_1-l_2}\wh{U}^{m_1+1} \otimes S^{l_1}(S^*)^{l_2}\right
   \vert \wh{V}^{n_1-n_2}\wh{U}^{m_2}\otimes S^{n_1}(S^*)^{n_2} \right)_{C^*(\wh{U})} \\
  &\hspace{4cm}= e^{-2\pi i\phi(l_1-l_2)}\wh{U}^{m_2-1-m_1}\delta_{l_1-l_2,n_1-n_2} \\
  &\hspace{4cm}= \left(\left.\wh{V}^{l_1-l_2}\wh{U}^{m_1}\otimes S^{l_1}(S^*)^{l_2}\right
  \vert e^{-2\pi i\phi(n_1-n_2)} \wh{V}^{n_1-n_2}\wh{U}^{m_2-1}
  \otimes S^{n_1}(S^*)^{n_2} \right)_{C^*(\wh{U})} \\
  &\hspace{4cm}= \left(\left. \wh{V}^{l_1-l_2}\wh{U}^{m_1}\otimes S^{l_1}(S^*)^{l_2}\right
  \vert \wh{U}^{-1} \cdot\left(\wh{V}^{n_1-n_2}\wh{U}^{m_2}
  \otimes S^{n_1}(S^*)^{n_2}\right) \right)_{C^*(\wh{U})}
\end{align*}
and then
\begin{align*}
   &\left(\left. \wh{V}\cdot\left(\wh{V}^{l_1-l_2}\wh{U}^{m_1}\otimes S^{l_1}(S^*)^{l_2}\right)\right
   \vert \wh{V}^{n_1-n_2}\wh{U}^{m_2}\otimes S^{n_1}(S^*)^{n_2} \right)_{C^*(\wh{U})} \\
    &\hspace{4cm}= \left(\left.\wh{V}^{l_1-l_2+1}\wh{U}^{m_1}\otimes S^{l_1+1}(S^*)^{l_2}\right
    \vert \wh{V}^{n_1-n_2}\wh{U}^{m_2}\otimes S^{n_1}(S^*)^{n_2} \right)_{C^*(\wh{U})} \\
     &\hspace{4cm}= \wh{U}^{m_2-m_1}\delta_{l_1-l_2+1,n_1-n_2} \\
     &\hspace{4cm}= \wh{U}^{m_2-m_1}\delta_{l_1-l_2,n_1-n_2-1} \\
     &\hspace{4cm}= \left(\left.\wh{V}^{l_1-l_2}\wh{U}^{m_1}\otimes S^{l_1}(S^*)^{l_2}\right
     \vert \wh{V}^{n_1-n_2-1}\wh{U}^{m_2}\otimes S^{n_1}(S^*)^{n_2+1} \right)_{C^*(\wh{U})} \\
     &\hspace{4cm}= \left(\left.\wh{V}^{l_1-l_2}\wh{U}^{m_1}\otimes S^{l_1}(S^*)^{l_2}\right
     \vert \wh{V}^{-1}\cdot\left( \wh{V}^{n_1-n_2}\wh{U}^{m_2}
     \otimes S^{n_1}(S^*)^{n_2}\right) \right)_{C^*(\wh{U})}.
\end{align*}
and so our generating elements are adjointable and unitary on the dense span of monomials
in $Z_{C^*(\wh{U})}$. Thus if $\wh{U},\,\wh{V}$ are bounded, they will generate an 
adjointable representation of $\calA_\phi$. 
To consider the boundedness of $\wh{U}$ and $\wh{V}$, we first note 
that the inner-product in $Z_{C^*(\wh{U})}$ is defined 
from multiplication in $\calT$ and the functional $\Psi$, 
which has the property $\Psi(T)\leq \|T\|$, by Equation 
\eqref{eq:Psi_bdd_on_C(S)}. These observations imply that
$$ 
\|a\|_{\End(Z)} 
= \sup_{\substack{ z\in Z \\ \|z\|=1}} (a\cdot z\mid a\cdot z)_{C^*(\wh{U})}
\leq \sup_{\substack{ z\in Z \\ \|z\|=1}} \|aa^*\|\, (z\mid z)_{C^*(\wh{U})} 
= \|aa^*\|.
$$
Therefore the action of $\calA_\phi$ is bounded, and
so extends to an adjointable action on $Z_{C^*(\wh{U})}$.
\end{proof}


In Section \ref{subsec:left_module_over_Aop}, we show that by considering a 
left module ${}_{C^*(\wh{U})}Z$, we may also obtain an adjointable 
representation of $\calA_\phi^\mathrm{op}$.
Before we finish building our Kasparov module, we need some further 
results arising from properties of the singular trace $\Psi$.
\begin{prop} 
\label{prop:finite_powers_of_V_dont_make_a_difference_in_Psi_metric}
Let $l_1-l_2=n_1-n_2$. Then $\wh{V}^{n_1-n_2}\wh{U}^m\otimes S^{n_1}(S^*)^{n_2} 
= \wh{V}^{l_1-l_2}\wh{U}^m\otimes S^{l_1}(S^*)^{l_2}$ as elements in $Z_{C^*(\wh{U})}$.
\end{prop}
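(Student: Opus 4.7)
The plan is to prove that the difference
\[
z := \wh{V}^{n_1-n_2}\wh{U}^m\otimes S^{n_1}(S^*)^{n_2} - \wh{V}^{l_1-l_2}\wh{U}^m\otimes S^{l_1}(S^*)^{l_2}
\]
is a null vector for the $C^*(\wh{U})$-valued inner product $(\,\cdot\mid\cdot\,)_{C^*(\wh{U})}$. Since $Z_{C^*(\wh{U})}$ was constructed by quotienting $\calT$ by vectors of zero length and then completing, this is exactly what it means for the two displayed monomials to coincide in the module. So it suffices to show $(z\mid z)_{C^*(\wh{U})} = 0$.

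First I would expand $(z\mid z)_{C^*(\wh{U})}$ by sesquilinearity into four terms. Each of these is already in the form to which the defining formula for the inner product applies. Using the identity from the proof of Proposition \ref{prop:repn_of_A_on_right_A_module} (and ultimately Proposition \ref{prop:Psi_vanishes_on_compacts_and_forces_V_powers_to_be_equal}), each term evaluates to $\wh{U}^{m-m}\,\delta_{p-q,\,r-s}=1$ whenever the $\wh{V}$-exponents on both sides of the pairing agree. By hypothesis $l_1-l_2=n_1-n_2$, so the exponent of $\wh{V}$ in every slot of every term is the same, the Kronecker deltas all evaluate to $1$, and each of the four terms equals $1$. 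The signs $+1-1-1+1$ then yield $(z\mid z)_{C^*(\wh{U})} = 0$, as desired.

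The key conceptual point, and the reason the statement holds at all, is that the functional $\Psi$ only detects the \emph{difference} $\alpha-\beta$ in a word $S^\alpha(S^*)^\beta$, not the individual exponents. Consequently the inner product cannot distinguish $S^{n_1}(S^*)^{n_2}$ from $S^{l_1}(S^*)^{l_2}$ once $n_1-n_2 = l_1-l_2$. There is no real obstacle here beyond careful bookkeeping: the substantive content was established in Proposition \ref{prop:Psi_vanishes_on_compacts_and_forces_V_powers_to_be_equal}, which forces $\Psi$ to see only the net shift. The present proposition is essentially the translation of that fact into a statement about the quotient module $Z_{C^*(\wh{U})}$, and it justifies indexing the dense generators of $Z_{C^*(\wh{U})}$ by a single pair $(d,m)\in\Z\times\Z$ rather than by triples $(n_1,n_2,m)$, which will streamline the construction of the unbounded operator $N$ and the Kasparov product in the subsequent sections.
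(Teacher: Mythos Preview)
Your proof is correct and follows essentially the same approach as the paper: both show that the difference $z$ has zero length in the module norm by exploiting that $\Psi$ only detects the net shift $n_1-n_2$. The paper expands the product $(S^{n_1}(S^*)^{n_2}-S^{l_1}(S^*)^{l_2})^*(S^{n_1}(S^*)^{n_2}-S^{l_1}(S^*)^{l_2})$ algebraically and then applies $\Psi$, whereas you quote the already-derived inner-product formula $\wh{U}^{m_2-m_1}\delta_{l_1-l_2,n_1-n_2}$ on each of the four cross terms; this is a slightly cleaner packaging of the same computation (note, though, that the formula you need is established just before and in the proof of the inner-product-module proposition, not in Proposition~\ref{prop:repn_of_A_on_right_A_module}).
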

\begin{proof}
We can assume without loss of generality that $l_1=n_1+k$ and 
$l_2=n_2+k$ for some $k\in\Z$. As a preliminary, we compute 
$\Psi\!\left[(S^{n_1}(S^*)^{n_2}-S^{n_1+k}(S^*)^{n_2+k})^* (S^{n_1}(S^*)^{n_2}
-S^{n_1+k}(S^*)^{n_2+k})\right]$. Firstly we expand
\begin{align*}
  &\left(S^{n_1}(S^*)^{n_2}-S^{n_1+k}(S^*)^{n_2+k}\right)^* 
  \left(S^{n_1}(S^*)^{n_2}-S^{n_1+k}(S^*)^{n_2+k}\right) \\
    &\hspace{4cm} = S^{n_2}(S^*)^{n_1}S^{n_1}(S^*)^{n_2} - S^{n_2}(S^*)^{n_1}S^{n_1+k}(S^*)^{n_2+k} \\
    &\hspace{5cm} - S^{n_2+k}(S^*)^{n_1+k}S^{n_1}(S^*)^{n_2} 
    +  S^{n_2+k}(S^*)^{n_1+k}S^{n_1+k}(S^*)^{n_2+k} \\
    &\hspace{4cm} = S^{n_2}(S^*)^{n_2} - S^{n_2+k}(S^*)^{n_2+k} 
    - S^{n_2+k}(S^*)^{n_2+k} + S^{n_2+k}(S^*)^{n_2+k} \\
    &\hspace{4cm} = S^{n_2}(S^*)^{n_2} - S^{n_2+k}(S^*)^{n_2+k}.
\end{align*}
We now recall that $\Psi(S^{\alpha}(S^*)^{\beta}) = \delta_{\alpha,\beta}$, so that 
\begin{align*}
  &\Psi\!\left[(S^{n_1}(S^*)^{n_2}-S^{n_1+k}(S^*)^{n_2+k})^* 
  (S^{n_1}(S^*)^{n_2}-S^{n_1+k}(S^*)^{n_2+k})\right] \\
   &\hspace{7cm} = \Psi(S^{n_2}(S^*)^{n_2}) - \Psi(S^{n_2+k}(S^*)^{n_2+k}) = 0.
\end{align*}
From this point, it is a simple task to show that 
$\wh{V}^{n_1-n_2}\wh{U}^m\otimes S^{n_1}(S^*)^{n_2} 
= \wh{V}^{n_1-n_2}\wh{U}^m\otimes S^{n_1+k}(S^*)^{n_2+k}$ in the norm 
induced by $(\,\cdot\mid\cdot\,)_{C^*(\wh{U})}$.
\end{proof}

\begin{lemma} 
\label{lemma:projections_relatively_orthogonal_in_vhat_direction}
Let $e_{n_1,n_2,m}$ denote the element 
$\wh{V}^{n_1-n_2}\wh{U}^m\otimes S^{n_1}(S^*)^{n_2}\in Z_{C^*(\wh{U})}$. Then for all $k\in\Z$
$$  	
\Theta_{e_{l_1,l_2,k},e_{l_1,l_2,k}}(e_{n_1,n_2,m}) = \delta_{l_1-l_2,n_1-n_2}\, e_{n_1,n_2,m}, 
$$
where $\Theta_{e,f}(g) = e(f|g)_{C^*(\wh{U})}$ are the rank-$1$ 
endomorphisms that generate $\End_{C^*(\wh{U})}^{0}(Z)$.
\end{lemma}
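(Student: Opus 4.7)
The plan is to simply unfold the definition of the rank-one endomorphism and then invoke the two preceding results: the explicit inner product formula established just after Proposition 2.3, and Proposition \ref{prop:finite_powers_of_V_dont_make_a_difference_in_Psi_metric}.

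First I would write
\begin{equation*}
\Theta_{e_{l_1,l_2,k},e_{l_1,l_2,k}}(e_{n_1,n_2,m}) \;=\; e_{l_1,l_2,k}\cdot\bigl(e_{l_1,l_2,k}\,\big|\,e_{n_1,n_2,m}\bigr)_{C^*(\wh{U})}
\end{equation*}
straight from the definition of $\Theta_{e,f}$. Then I would apply the inner product computation already carried out in the paragraph preceding Proposition 2.3, namely
\begin{equation*}
\bigl(e_{l_1,l_2,k}\,\big|\,e_{n_1,n_2,m}\bigr)_{C^*(\wh{U})} \;=\; \wh{U}^{m-k}\,\delta_{l_1-l_2,\,n_1-n_2},
\end{equation*}
which immediately forces the right hand side to vanish unless $l_1-l_2=n_1-n_2$, giving the Kronecker delta factor in the statement.

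Assuming now that $l_1-l_2=n_1-n_2$, I would carry out the right action:
\begin{equation*}
e_{l_1,l_2,k}\cdot \wh{U}^{m-k} \;=\; \wh{V}^{l_1-l_2}\wh{U}^{k}\wh{U}^{m-k}\otimes S^{l_1}(S^*)^{l_2} \;=\; \wh{V}^{l_1-l_2}\wh{U}^{m}\otimes S^{l_1}(S^*)^{l_2}.
\end{equation*}
To finish I would use Proposition \ref{prop:finite_powers_of_V_dont_make_a_difference_in_Psi_metric}: since $l_1-l_2=n_1-n_2$, writing $l_i=n_i+k'$ for a common $k'\in\Z$, that proposition identifies this element with $\wh{V}^{n_1-n_2}\wh{U}^{m}\otimes S^{n_1}(S^*)^{n_2}=e_{n_1,n_2,m}$ in $Z_{C^*(\wh{U})}$, yielding the claimed formula.

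There is essentially no obstacle here: the only substantive step is recognising that without Proposition \ref{prop:finite_powers_of_V_dont_make_a_difference_in_Psi_metric} the formula would be false on the level of representative elements in $\calT$, but becomes true after passing to the quotient by the null vectors of the $C^*(\wh{U})$-valued inner product. In particular, the statement is really a consequence of the key structural property that the singular functional $\Psi$ can only detect the difference $n_1-n_2$ and not the individual exponents.
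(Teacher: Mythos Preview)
Your proof is correct and follows essentially the same approach as the paper: unfold the rank-one endomorphism, apply the previously computed inner product formula to obtain $\wh{U}^{m-k}\,\delta_{l_1-l_2,n_1-n_2}$, carry out the right action, and then invoke Proposition~\ref{prop:finite_powers_of_V_dont_make_a_difference_in_Psi_metric} to identify the result with $e_{n_1,n_2,m}$. The paper's version is simply terser, compressing these steps into a single displayed computation.
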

\begin{proof}
We check that 
\begin{align*}
   \Theta_{e_{l_1,l_2,k},e_{l_1,l_2,k}}(e_{n_1,n_2,m}) 
   &=  \wh{V}^{l_1-l_2}\wh{U}^k\otimes S^{l_1}(S^*)^{l_2} \\
     &\qquad {} \times \left(\left.  \wh{V}^{l_1-l_2}\wh{U}^k\otimes S^{l_1}(S^*)^{l_2}\right
     \vert \wh{V}^{n_1-n_2}\wh{U}^{m}\otimes S^{n_1}(S^*)^{n_2} \right)_{C^*(\wh{U})} \\
     &= \left( \wh{V}^{l_1-l_2}\wh{U}^k\otimes S^{l_1}(S^*)^{l_2}\right)
     \cdot \wh{U}^{m-k} \delta_{l_1-l_2,n_1-n_2} \\
     &= \wh{V}^{n_1-n_2}\wh{U}^m \otimes S^{n_1}(S^*)^{n_2}\delta_{l_1-l_2,n_1-n_2},
\end{align*}
where we have used Proposition \ref{prop:finite_powers_of_V_dont_make_a_difference_in_Psi_metric}.
\end{proof}

With these preliminary results out the way, we now state the main result of this subsection.
\begin{prop} 
\label{prop:we_get_right_kas_mod_for_A}
Define the operator $N:\Dom(N)\subset Z \to Z$ such that 
$N\left(\wh{V}^{n_1-n_2}\wh{U}^m\otimes S^{n_1}(S^*)^{n_2}\right) 
= (n_1-n_2)\wh{V}^{n_1-n_2}\wh{U}^m\otimes S^{n_1}(S^*)^{n_2}$. 
Then $\left({\calA_\phi},Z_{C^*(\wh{U})}, N\right)$ is an unbounded, odd Kasparov module.
\end{prop}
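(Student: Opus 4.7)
The plan is to exploit the natural decomposition of $Z_{C^*(\wh{U})}$ hiding in Propositions 2.5--2.6 and Lemma 2.8, which reveals $N$ as a diagonal number operator whose spectral projections are rank-one module endomorphisms. Concretely, by Proposition 2.6 the class $e_{n_1,n_2,m}\in Z$ depends only on $j:=n_1-n_2$ and $m$, so choosing the normalised representative
$$ e_j := \begin{cases} \wh{V}^j\otimes S^j, & j\geq 0, \\ \wh{V}^j\otimes (S^*)^{|j|}, & j<0, \end{cases} $$
one checks $(e_j|e_j)_{C^*(\wh{U})}=1$ and $e_{n_1,n_2,m}=e_j\cdot\wh{U}^m$. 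Lemma 2.8 shows $P_j:=\Theta_{e_j,e_j}$ satisfies $P_j e_{n_1,n_2,m}=\delta_{j,n_1-n_2}\,e_{n_1,n_2,m}$, so the $P_j$ are pairwise orthogonal adjointable projections onto submodules $Z_j:=e_j\cdot C^*(\wh{U})\cong C^*(\wh{U})$, and $Z$ is the module completion of the algebraic direct sum $\bigoplus_{j\in\Z}Z_j$. Under this decomposition $N=\bigoplus_j j\cdot\mathrm{id}_{Z_j}$.

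From this picture the analytic conditions become transparent. Self-adjointness and regularity of $N$ follow because it is a diagonal integer-valued operator on an orthogonal sum of copies of $C^*(\wh{U})$, with bounded resolvent $(N\pm i)^{-1}=\bigoplus_j(j\pm i)^{-1}\mathrm{id}_{Z_j}$ manifestly defined on all of $Z$. For the commutator condition take $\calA$ to be the dense $*$-subalgebra of $\calA_\phi$ generated algebraically by $\wh{U}^{\pm1},\wh{V}^{\pm1}$. The formulas in Proposition 2.5 show that $\wh{U}^\alpha$ preserves each $Z_j$ while $\wh{V}^\beta$ maps $Z_j$ into $Z_{j+\beta}$; consequently $[N,\wh{U}^\alpha\wh{V}^\beta]=\beta\,\wh{U}^\alpha\wh{V}^\beta$, bounded and adjointable, and by linearity $[N,a]$ is bounded adjointable for every $a\in\calA$.

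For compactness of $\phi(a)(1+N^2)^{-1/2}$, observe that the diagonal structure gives
$$ (1+N^2)^{-1/2}=\sum_{j\in\Z}(1+j^2)^{-1/2}P_j=\sum_{j\in\Z}(1+j^2)^{-1/2}\,\Theta_{e_j,e_j}. $$
The finite truncations $\sum_{|j|\leq M}(1+j^2)^{-1/2}\Theta_{e_j,e_j}$ lie in $\End^{00}_{C^*(\wh{U})}(Z)$, and the tail is an orthogonal sum of rank-one projections with coefficients bounded by $(1+M^2)^{-1/2}\to 0$, so the partial sums converge in operator norm. Hence $(1+N^2)^{-1/2}\in\End^0_{C^*(\wh{U})}(Z)$, and $\phi(a)(1+N^2)^{-1/2}$ is compact because the compact endomorphisms form a two-sided ideal. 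The main obstacle I anticipate is not in any of the computations themselves but in justifying the decomposition $Z=\bigoplus_j Z_j$ rigorously at the level of the inner-product completion: one must check that the defining monomials $\wh{V}^{n_1-n_2}\wh{U}^m\otimes S^{n_1}(S^*)^{n_2}$ already lie in the algebraic span of the $e_j\cdot C^*(\wh{U})$ (which follows from Proposition 2.6), and that the module norm is compatible with the obvious orthogonal decomposition (which follows from the explicit formula for $(\,\cdot\,|\,\cdot\,)_{C^*(\wh{U})}$ on monomials). Once this structural identification is granted, every remaining verification reduces to a calculation on a single orthogonal summand.
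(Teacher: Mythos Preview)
Your proposal is correct and follows essentially the same approach as the paper: both exploit Lemma~2.8 to identify the rank-one projections $P_j=\Theta_{e_j,e_j}$ (the paper's $\Phi_k$) as an orthogonal family decomposing $Z$, thereby exhibiting $N$ as a diagonal number operator whose self-adjointness, regularity, commutator bounds, and compact resolvent all follow from this structure. The only cosmetic differences are that the paper takes the Schwartz-class subalgebra rather than your polynomial subalgebra (either works), and cites \cite{PR06} for the self-adjointness argument where you argue directly from the bounded resolvent.
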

\begin{proof}
Lemma  \ref{lemma:projections_relatively_orthogonal_in_vhat_direction} shows that
for any $n_1,n_2$ with $n_1-n_2=k$, the operator
$\Phi_{k}=\Theta_{e_{n_1,n_2,0},e_{n_1,n_2,0}}$ is an 
adjointable projection. These projections form an orthogonal family 
$$  
\Phi_{l}\Phi_{k} = \delta_{l,k}\Phi_k 
$$
by Lemma \ref{lemma:projections_relatively_orthogonal_in_vhat_direction}, and 
it is straightforward to show that $\sum_{k\in\Z}\Phi_k$ is the identity of $Z$ (convergence in the
strict topology). 
The arguments used in~\cite{PR06} show that given $z\in Z$ and 
defining $\Phi_k z = z_k$, we have that
$$  
z = \sum_{k\in\Z} z_k.  
$$
This allows us to define a number operator 
$$  
Nz = \sum_{k\in\Z} k z_k 
$$
for those $z\in\Dom(N)=\left\{ \sum_k z_k\,:\, \sum_k k^2(z_k|z_k)_{C^*(\wh{U})} < \infty \right\}$. 
As $N$ is given in in its spectral representation, standard proofs show that 
$N$ is self-adjoint (again, see~\cite{PR06} for an explicit proof).

To show that $N$ is regular, we observe that 
$$  
N^2\left(\wh{V}^{n_1-n_2}\wh{U}^m\otimes S^{n_1}(S^*)^{n_2}\right) 
= (n_1-n_2)^2\, \wh{V}^{n_1-n_2}\wh{U}^m\otimes S^{n_1}(S^*)^{n_2} 
$$
and so $N^2$ has the spanning set of $\calT$ as eigenvectors. 
Therefore $(1+N^2)$ has dense range and so $N$ is regular. 

To check that we have an unbounded Kasparov module, 
we need to show that $[N,a]$ is a bounded endomorphism for 
$a$ in a dense subset of $\calA_\phi$ and that 
$(1+N^2)^{-1/2}\in\End_{C^*(\wh{U})}^0(Z)$. We have that, for $\beta\geq 0$
\begin{align*}
  N(\wh{U}^\alpha \wh{V}^\beta)\left(\wh{V}^{n_1-n_2}\wh{U}^m \otimes S^{n_1}(S^*)^{n_2}\right) 
  &= N\left( e^{2\pi i\phi\alpha(n_1-n_2+\beta)} \wh{V}^{n_1-n_2+\beta}\wh{U}^{m+\alpha}
  \otimes S^{n_1+\beta}(S^*)^{n_2}\right) \\ 
    &= (n_1-n_2+\beta)e^{2\pi i\phi\alpha(n_1-n_2+\beta)} \wh{V}^{n_1-n_2+\beta}\wh{U}^{m+\alpha}
    \otimes S^{n_1+\beta}(S^*)^{n_2}
\end{align*}
and
$$  
(\wh{U}^\alpha \wh{V}^\beta)N\left(\wh{V}^{n_1-n_2}\wh{U}^m \otimes S^{n_1}(S^*)^{n_2}\right) 
= (n_1-n_2)e^{2\pi i\phi\alpha(n_1-n_2+\beta)} \wh{V}^{n_1-n_2+\beta}\wh{U}^{m+\alpha}
\otimes S^{n_1+\beta}(S^*)^{n_2}, 
$$
which implies that $[N,\wh{U}^\alpha\wh{V}^\beta] = \beta \wh{U}^\alpha \wh{V}^\beta$ 
since the span of $\wh{V}^{n_1-n_2}\wh{U}^m\otimes S^{n_1}(S^*)^{n_2}$ is 
dense in the domain of $N$ in the graph norm. Hence for an element 
$a=\sum_{\alpha,\beta}a_{\alpha,\beta}\wh{U}^\alpha\wh{V}^\beta$ in a 
dense subset of $\calA_\phi$ with $(a_{\alpha,\beta})\in\calS(\Z^2)$, the Schwartz class sequences,
we have that 
$$ 
[N,a] = \sum_{\alpha,\beta}\beta a_{\alpha,\beta} \wh{U}^\alpha \wh{V}^\beta 
$$
which is in $\calA_\phi$ as $\beta a_{\alpha,\beta}\in\calS(\Z^2)$ and 
therefore is bounded. An entirely analogous argument also works for $\beta < 0$.

Finally, we recall that $N^2$ has a set of eigenvectors given by the spanning functions \\ 
$\left\{\wh{V}^{n_1-n_2}\wh{U}^m\otimes S^{n_1}(S^*)^{n_2}\,:\,n_1,n_2\in\N,\,m\in\Z\right\}$. 
This means that we can write
$$  
N^2 = \bigoplus_{k\in\Z} k^2 \Phi_{k} 
$$
where $\Phi_{k}$ is the projection on onto 
${\rm span}\{\wh{V}^{n_1-n_2}\wh{U}^m\otimes S^{n_1}(S^*)^{n_2}\in Z_{C^*(\wh{U})}:\,
n_1-n_2=k,\,m\in\Z\}$. 
As the projections $\Phi_{k}$ can be written as a rank one operator
$\Theta_{e_{n_1,n_2,0},e_{n_1,n_2,0}} \in \End_{C^*(\wh{U})}^{00}(Z)$, we have that
$$  
(1+ N^2)^{-1/2} = \bigoplus_{k\in\Z} \left(1+ k^2\right)^{-1/2} \Phi_{k}
$$
is a norm-convergent sum of elements in $\End_{C^*(\wh{U})}^{00}(Z)$ 
and is therefore in $\End_{C^*(\wh{U})}^{0}(Z)$.
\end{proof}

\subsection{A left module with $\calA_\phi^\mathrm{op}$-action} \label{subsec:left_module_over_Aop}
The module $Z_{C^*(\wh{U})}$ has more structure. It is in fact a left 
$C^*$-module over $C^*(\wh{U})$ where we  define an inner-product by
\begin{align*}
   {}_{C^*(\wh{U})}\left(\left.\wh{V}^{l_1-l_2}\wh{U}^{m_1}\otimes S^{l_1}(S^*)^{l_2}\right
   \vert \wh{V}^{n_1-n_2}\wh{U}^{m_2}\otimes S^{n_1}(S^*)^{n_2}\right) 
   &= \wh{V}^{l_1-l_2}\wh{U}^{m_1}\left(\wh{V}^{n_1-n_2}\wh{U}^{m_2}\right)^*  \\
      &\qquad {} \times \Psi\!\left[S^{l_1}(S^*)^{l_2}\left(S^{n_1}(S^*)^{n_2}\right)^*\right] \\
   &= \wh{V}^{l_1-l_2}\wh{U}^{m_1-m_2}\wh{V}^{n_2-n_1}\delta_{l_1-l_2,n_1-n_2} \\
   &= \eta_{n_1-n_2}^{-1}(\wh{U}^{m_1-m_2})\delta_{l_1-l_2,n_1-n_2},
\end{align*}
recalling that $\eta_n(\wh{U}^m) = \wh{V}^{-n}\wh{U}^m\wh{V}^n$ is the 
automorphism defining the crossed-product structure. We check 
compatibility of ${}_{C^*(\wh{U})}(\,\cdot\mid\cdot\,)$ with left-multiplication by $C^*(\wh{U})$, where
\begin{align*}
   &{}_{C^*(\wh{U})}\left(\left.\wh{U}\wh{V}^{l_1-l_2}\wh{U}^{m_1}\otimes S^{l_1}(S^*)^{l_2}\right
   \vert \wh{V}^{n_1-n_2}\wh{U}^{m_2}\otimes S^{n_1}(S^*)^{n_2}\right)  \\
   &\hspace{5.5cm} = \wh{U}\wh{V}^{l_1-l_2}\wh{U}^{m_1-m_2}\wh{V}^{n_1-n_1}\delta_{l_1-l_2,n_1-n_2} \\
     &\hspace{5.5cm} = 
     \wh{U} \cdot {}_{C^*(\wh{U})}\left(\left.\wh{V}^{l_1-l_2}\wh{U}^{m_1}\otimes S^{l_1}(S^*)^{l_2}\right
     \vert \wh{V}^{n_1-n_2}\wh{U}^{m_2}\otimes S^{n_1}(S^*)^{n_2}\right).
\end{align*}
The other axioms for a left $C^*(\wh{U})$-valued inner-product are 
straightforward. We complete in the induced norm and denote our 
left-module by ${}_{C^*(\wh{U})}Z$.

\begin{prop} 
\label{prop:adjointable_rep_of_Aop_on_left_module}
There is an adjointable representation of 
$\calA_{-\phi}\cong\calA_\phi^\mathrm{op}$ on ${}_{C^*(\wh{U})}Z$.
\end{prop}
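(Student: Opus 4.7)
The approach is to mirror closely the proof of Proposition \ref{prop:repn_of_A_on_right_A_module}, replacing left multiplication by $\calT$ on the right module $Z_{C^*(\wh{U})}$ with right multiplication by $\calT$ on the left module ${}_{C^*(\wh{U})}Z$. A right action of $\calA_\phi$ is the same datum as a left action of $\calA_\phi^{\mathrm{op}}\cong\calA_{-\phi}$, so this produces the desired representation.

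The first step is to define the right $\calT$-action on monomials $z=\wh{V}^{n_1-n_2}\wh{U}^m\otimes S^{n_1}(S^*)^{n_2}$. Using $\wh{U}^m\wh{V}=e^{2\pi i\phi m}\wh{V}\wh{U}^m$ to commute factors past one another, one obtains
\begin{align*}
z\cdot(\wh{U}\otimes 1) &= \wh{V}^{n_1-n_2}\wh{U}^{m+1}\otimes S^{n_1}(S^*)^{n_2}, \\
z\cdot(\wh{V}\otimes S) &= e^{2\pi i\phi m}\,\wh{V}^{n_1-n_2+1}\wh{U}^m\otimes S^{n_1}(S^*)^{n_2}S,
\end{align*}
with a symmetric formula for $(\wh{V}\otimes S)^*$; the $S$-word is then put in normal form via $S^*S=1$ and $SS^*=1-P_{n=0}$, splitting into cases as in Proposition \ref{prop:repn_of_A_on_right_A_module}. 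The second step is to show the action descends to $\calT/\psi[C^*(\wh{U})\otimes\calK]\cong\calA_\phi$: right multiplication of $z$ by any element of the ideal $\psi[C^*(\wh{U})\otimes\calK]$ produces a vector whose $S$-component is compact, and hence of zero length in ${}_{C^*(\wh{U})}Z$ because ${}_{C^*(\wh{U})}(\,\cdot\mid\cdot\,)$ is built from $\Psi$, which vanishes on compacts (Proposition \ref{prop:Psi_vanishes_on_compacts_and_forces_V_powers_to_be_equal}). Adjointability on the generators then reduces to two direct block computations on monomials, parallel to those in Proposition \ref{prop:repn_of_A_on_right_A_module}; since ${}_{C^*(\wh{U})}(z_1\mid z_2) = \eta_{n_1-n_2}^{-1}(\wh{U}^{m_1-m_2})\delta_{l_1-l_2,n_1-n_2}$ on monomials, commuting a generator across the inner product produces matching shifts in the $\wh{U}^{m_1-m_2}$ factor and matching conjugate phases. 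Finally, boundedness follows from $\Psi(T)\le\|T\|$ (Equation \eqref{eq:Psi_bdd_on_C(S)}) by the same estimate as in Proposition \ref{prop:repn_of_A_on_right_A_module}, so the action extends from a dense Schwartz-class subalgebra to all of $\calA_{-\phi}$.

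The main obstacle I expect is purely combinatorial: the phase factors produced when normal-ordering $\wh{U}$ past $\wh{V}$ after right multiplication must line up to realise the opposite commutation relation $UV=e^{-2\pi i\phi}VU$ of $\calA_\phi^{\mathrm{op}}$, rather than the original $\wh{U}\wh{V}=e^{2\pi i\phi}\wh{V}\wh{U}$ of $\calA_\phi$. The sign flip is exactly what one expects from passing to the opposite algebra, and checking it is a matter of careful bookkeeping, but there is no new analytic content beyond the vanishing, trace, and boundedness properties of $\Psi$ already established in Proposition \ref{prop:Psi_vanishes_on_compacts_and_forces_V_powers_to_be_equal}.
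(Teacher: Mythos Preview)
Your proposal is correct and is essentially the paper's own argument: the paper defines $U\cdot z=z\cdot\wh{U}$ and $V\cdot z=z\cdot\wh{V}$ (i.e.\ right multiplication in $\calT$), verifies $UV=e^{-2\pi i\phi}VU$, checks adjointability of $U$ and $V$ on monomials with respect to ${}_{C^*(\wh{U})}(\,\cdot\mid\cdot\,)$ by the same kind of direct computation you describe, and concludes boundedness from $\Psi(T)\le\|T\|$ exactly as you do. The only cosmetic difference is that the paper skips your explicit ``descends to the quotient'' step and writes the normal form as $S^{n_1+1}(S^*)^{n_2}$ rather than $S^{n_1}(S^*)^{n_2-1}$, but these agree in $Z$ by Proposition~\ref{prop:finite_powers_of_V_dont_make_a_difference_in_Psi_metric}.
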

\begin{proof}
We construct an action by $C^*(U,V)\cong\calA_\phi^{\mathrm{op}}$ by defining
\begin{align*}
  U\cdot\left(\wh{V}^{n_1-n_2}\wh{U}^m\otimes S^{n_1}(S^*)^{n_2}\right) 
  &= \left(\wh{V}^{n_1-n_2}\wh{U}^m\otimes S^{n_1}(S^*)^{n_2}\right)\cdot \wh{U} 
  = \wh{V}^{n_1-n_2}\wh{U}^{m+1}\otimes S^{n_1}(S^*)^{n_2}, \\
  V\cdot\left(\wh{V}^{n_1-n_2}\wh{U}^m\otimes S^{n_1}(S^*)^{n_2}\right) 
  &= \left(\wh{V}^{n_1-n_2}\wh{U}^m\otimes S^{n_1}(S^*)^{n_2}\right)\cdot \wh{V} 
  = e^{2\pi i\phi m}\wh{V}^{n_1-n_2+1}\wh{U}^m\otimes S^{n_1+1}(S^*)^{n_2}
\end{align*}
and extending  to the whole algebra. One finds that, as 
operators on ${}_{C^*(\wh{U})}Z$, $UV = e^{-2\pi i\phi}VU$. 
As previously, we check adjointability on generating elements, where
\begin{align*}
    &{}_{C^*(\wh{U})}\left(\left.U\cdot\left(\wh{V}^{l_1-l_2}\wh{U}^{m_1}\otimes S^{l_1}(S^*)^{l_2}\right)\right
    \vert \wh{V}^{n_1-n_2}\wh{U}^{m_2}\otimes S^{n_1}(S^*)^{n_2}\right) \\
     &\hspace{5cm}= \eta_{n_1-n_2}^{-1}(\wh{U}^{m_1+1-m_2})\delta_{n_1-n_2,l_1-l_2} \\
      &\hspace{5cm}=  {}_{C^*(\wh{U})}\left(\left.\wh{V}^{l_1-l_2}\wh{U}^{m_1}\otimes S^{l_1}(S^*)^{l_2}\right
      \vert \wh{V}^{n_1-n_2}\wh{U}^{m_2-1}\otimes S^{n_1}(S^*)^{n_2}\right) \\
      &\hspace{5cm}=  {}_{C^*(\wh{U})}\left(\left.\wh{V}^{l_1-l_2}\wh{U}^{m_1}\otimes S^{l_1}(S^*)^{l_2}\right
      \vert U^{-1}\cdot\left( \wh{V}^{n_1-n_2}\wh{U}^{m_2}\otimes S^{n_1}(S^*)^{n_2}\right) \right)
\end{align*}
as expected. For $V$, we find that
\begin{align*}
   &{}_{C^*(\wh{U})}\left(\left.V\cdot\left(\wh{V}^{l_1-l_2}\wh{U}^{m_1}\otimes S^{l_1}(S^*)^{l_2}\right)\right
   \vert \wh{V}^{n_1-n_2}\wh{U}^{m_2}\otimes S^{n_1}(S^*)^{n_2}\right)  \\
    &\hspace{4cm} =  
    {}_{C^*(\wh{U})}\left(\left.e^{2\pi i\phi m_1}\wh{V}^{l_1-l_2+1}\wh{U}^{m_1}
    \otimes S^{l_1+1}(S^*)^{l_2}\right\vert \wh{V}^{n_1-n_2}\wh{U}^{m_2}\otimes S^{n_1}(S^*)^{n_2}\right) \\
    &\hspace{4cm} = 
    e^{2\pi i\phi m_1}\wh{V}^{l_1-l_2+1}\wh{U}^{m_1-m_2}\wh{V}^{n_2-n_1}\delta_{l_1-l_2+1,n_1-n_2} \\
    &\hspace{4cm} = 
    e^{2\pi i\phi m_1}e^{-2\pi i\phi(m_1-m_2)}\wh{V}^{l_1-l_2}\wh{U}^{m_1-m_2}\wh{V}^{n_2-n_1+1} 
    \delta_{l_1-l_2,n_1-n_2-1} \\
    &\hspace{4cm} = 
    {}_{C^*(\wh{U})}\left(\left.\wh{V}^{l_1-l_2}\wh{U}^{m_1}\otimes S^{l_1}(S^*)^{l_2}\right
    \vert e^{-2\pi i\phi m_2}\wh{V}^{n_1-n_2-1}\wh{U}^{m_2}\otimes S^{n_1}(S^*)^{n_2+1}\right) \\
    &\hspace{4cm} = {}_{C^*(\wh{U})}\left(\left.\wh{V}^{l_1-l_2}\wh{U}^{m_1}\otimes S^{l_1}(S^*)^{l_2}\right
    \vert (\wh{V}^{n_1-n_2}\wh{U}^{m_2}\otimes S^{n_1}(S^*)^{n_2})\wh{V}^{-1}\right) \\
    &\hspace{4cm} = {}_{C^*(\wh{U})}\left(\left.\wh{V}^{l_1-l_2}\wh{U}^{m_1}\otimes S^{l_1}(S^*)^{l_2}\right
    \vert V^{-1}\cdot\left(\wh{V}^{n_1-n_2}\wh{U}^{m_2}\otimes S^{n_1}(S^*)^{n_2}\right)\right)
\end{align*}
and so our generating elements are adjointable and unitary on the dense span of monomials
in ${}_{C^*(\wh{U})}Z$. Thus if $U,\,V$ are bounded, they will generate an 
adjointable representation of $\calA_\phi^\mathrm{op}$. 
To consider the boundedness of $U$ and $V$, we first note 
that the inner-product in ${}_{C^*(\wh{U})}Z$ is defined 
from multiplication in $\calT$ and the functional $\Psi$, 
which has the property $\Psi(T)\leq \|T\|$, by Equation 
\eqref{eq:Psi_bdd_on_C(S)}. These observations imply that
$$ 
\|a^{\mathrm{op}}\|_{\End(Z)} = \sup_{\substack{ z\in Z \\ \|z\|=1}} {}_{C^*(\wh{U})}(a^\mathrm{op}\cdot z\mid a^\mathrm{op}\cdot z) \leq \sup_{\substack{ z\in Z \\ \|z\|=1}} \|a^\mathrm{op}(a^\mathrm{op})^*\|\, {}_{C^*(\wh{U})}(z\mid z) = \|a^\mathrm{op}(a^\mathrm{op})^*\|. 
$$
Therefore the action of $\calA_\phi^\mathrm{op}$ is bounded, and
so extends to an adjointable action on ${}_{C^*(\wh{U})}Z$.
\end{proof}

\begin{remark}
Our construction of ${}_{C^*(\wh{U})}Z$ shows that $Z$ can be equipped with a 
bimodule structure over $C^*(\wh{U})$. Proposition \ref{prop:repn_of_A_on_right_A_module} and \ref{prop:adjointable_rep_of_Aop_on_left_module} show that the right (resp. left) 
module comes with an adjointable representation of $\calA_\phi$ (resp. $\calA_\phi^\mathrm{op}$). 
While it may be tempting to think so, we emphasise that these representations 
are \emph{not} adjointable on the left (resp. right) module.

Another thing to note is that the actions of $\calA_\phi$ and $\calA_\phi^\mathrm{op}$ 
on $Z$ commute. The proof of this is a computation; the only part that requires 
some work is to show that $[\wh{U},V]=0$. Since
$$  
\wh{U}V \left(\wh{V}^{n_1-n_2}\wh{U}^m\otimes V^{n_1}(V^*)^{n_2}\right) 
= e^{2\pi i\phi(n_1-n_2+1)}e^{2\pi i\phi m} \wh{V}^{n_1-n_2+1}\wh{U}^{m+1}\otimes V^{n_1}(V^*)^{n_2} 
$$
and 
$$  
V\wh{U} \left(\wh{V}^{n_1-n_2}\wh{U}^m\otimes V^{n_1}(V^*)^{n_2}\right) 
= e^{2\pi i\phi(m+1)}e^{2\pi i\phi(n_1-n_2)} \wh{V}^{n_1-n_2+1}\wh{U}^{m+1}\otimes V^{n_1}(V^*)^{n_2}, 
$$
we find that, as required, $[\wh{U},V]=0$. Once again, we reiterate that these 
actions cannot be considered as simultaneous representations on the level of 
right or left $C^*(\wh{U})$-modules.
\end{remark}

All the technical results in Section \ref{subsec:constructing_the_right_A_module} 
about the singular trace $\Psi$ still hold in the left-module setting. In particular, 
a completely analogous argument to the proof of 
Proposition \ref{prop:we_get_right_kas_mod_for_A} gives us the following.
\begin{prop}
The tuple $\left( \calA_\phi^\mathrm{op}, {}_{C^*(\wh{U})}Z,N\right)$ is an 
odd, unbounded $\calA_\phi^\mathrm{op}$-$C^*(\wh{U})^\mathrm{op}$ Kasparov module.
\end{prop}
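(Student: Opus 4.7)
The plan is to mirror essentially verbatim the proof of Proposition \ref{prop:we_get_right_kas_mod_for_A}, replacing the right $C^*(\wh{U})$-module structure by the left $C^*(\wh{U})$-module structure of Section \ref{subsec:left_module_over_Aop} throughout. The operator $N$ is defined on the same spanning set of monomials $e_{n_1,n_2,m}=\wh{V}^{n_1-n_2}\wh{U}^m\otimes S^{n_1}(S^*)^{n_2}$ by $Ne_{n_1,n_2,m}=(n_1-n_2)e_{n_1,n_2,m}$. What changes, and what I need to track carefully, is that adjointability and compactness must now be with respect to the left inner product ${}_{C^*(\wh{U})}(\,\cdot\mid\cdot\,)$ and the algebra of compact left endomorphisms $\End^0_{C^*(\wh{U})^{\mathrm{op}}}({}_{C^*(\wh{U})}Z)$.

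First, I would show that $N$ is self-adjoint and regular on ${}_{C^*(\wh{U})}Z$. The argument in Proposition \ref{prop:we_get_right_kas_mod_for_A} uses only the spectral decomposition $Nz=\sum_k k z_k$ with $z_k=\Phi_k z$, so it suffices to verify that the projections $\Phi_k$ onto the $n_1-n_2=k$ eigenspaces are orthogonal and sum strictly to the identity on ${}_{C^*(\wh{U})}Z$. The formulas in Section \ref{subsec:left_module_over_Aop} show that the left inner product, like the right one, vanishes between vectors with different values of $n_1-n_2$, so the analogues of Lemma \ref{lemma:projections_relatively_orthogonal_in_vhat_direction} and Proposition \ref{prop:finite_powers_of_V_dont_make_a_difference_in_Psi_metric} go through with the same Kronecker-delta bookkeeping. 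Regularity of $N$ then follows as before because $1+N^2$ has dense range on the eigenspace span.

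Next I would compute the commutator $[N,a^{\mathrm{op}}]$ for $a^{\mathrm{op}}$ in a dense subalgebra of $\calA_\phi^{\mathrm{op}}$. Using the explicit action from Proposition \ref{prop:adjointable_rep_of_Aop_on_left_module}, $U$ preserves the index $n_1-n_2$ while $V$ raises it by one, so $[N,U]=0$ and $[N,V]=V$; by induction $[N,U^\alpha V^\beta]=\beta U^\alpha V^\beta$. For $a=\sum_{\alpha,\beta}a_{\alpha,\beta}U^\alpha V^\beta$ with $(a_{\alpha,\beta})\in\calS(\Z^2)$, the series $\sum\beta a_{\alpha,\beta}U^\alpha V^\beta$ defines an element of $\calA_\phi^{\mathrm{op}}$ and is therefore an adjointable left endomorphism by Proposition \ref{prop:adjointable_rep_of_Aop_on_left_module}.

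The step I expect to require the most care is showing $(1+N^2)^{-1/2}\in\End^0_{C^*(\wh{U})^{\mathrm{op}}}({}_{C^*(\wh{U})}Z)$, since we now need compactness of left endomorphisms, defined in terms of left rank-one operators ${}_L\Theta_{e,f}(g)={}_{C^*(\wh{U})}(g\mid e)\cdot f$. The plan is to verify, by a direct computation paralleling Lemma \ref{lemma:projections_relatively_orthogonal_in_vhat_direction}, that ${}_L\Theta_{e_{n_1,n_2,0},\,e_{n_1,n_2,0}}$ acts as the projection $\Phi_k$ with $k=n_1-n_2$; the fact that $\Psi$ is a trace and that the left inner product differs from the right one only by an $\eta$-twist means the same identity yields $\Phi_k\in\End^{00}_{C^*(\wh{U})^{\mathrm{op}}}({}_{C^*(\wh{U})}Z)$. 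Then $(1+N^2)^{-1/2}=\bigoplus_{k\in\Z}(1+k^2)^{-1/2}\Phi_k$ is a norm-convergent sum of finite-rank left endomorphisms and hence left-compact, completing the verification of the Kasparov module axioms.
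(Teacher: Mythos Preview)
Your proposal is correct and takes essentially the same approach as the paper, which simply states that ``a completely analogous argument to the proof of Proposition \ref{prop:we_get_right_kas_mod_for_A}'' applies. In fact you supply more detail than the paper does, correctly identifying the left rank-one operators and verifying that the $\eta$-twist in the left inner product cancels against the phase picked up under the left $C^*(\wh{U})$-action, so that ${}_L\Theta_{e_{n_1,n_2,0},e_{n_1,n_2,0}}=\Phi_{n_1-n_2}$ as required.
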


\subsection{Relating the module to the extension class}

Now we put the pieces together.
By~\cite[Section 7]{Kasparov80}, the extension class associated to 
$\left(\calA_\phi, Z_{C^*(\wh{U})},N\right)$ comes from the short exact sequence
\begin{equation} 
\label{eq:extension_SES}
  0 \to \End_{C^*(\wh{U})}^0(PZ) \to C^*(P\calA_\phi P) \to \calA_\phi \to 0,
\end{equation}
where $P=\chi_{[0,\infty)}(N)$ is the non-negative spectral projection.

We have that the map $W:Z \to \ell^2(\Z)\otimes C^*(\wh{U})$ given by
$$  
W\left( \wh{V}^{n_1-n_2}\wh{U}^m\otimes S^{n_1}(S^*)^{n_2} \right) 
= e_{n_1-n_2}\otimes \wh{U}^m 
$$
is an adjointable unitary isomorphism. Conjugation by the unitary $W$ gives an explicit 
isomorphism $\End_{C^*(\wh{U})}^0(PZ) \cong \calK[\ell^2(\N)]\otimes C^*(\wh{U})$. 
This isomorphism is compatible with the sequence in 
equation \eqref{eq:extension_SES} in that the commutators $[P,S^k]$ and 
$[P,(S^*)^k]$ generate $\calK[\ell^2(\N)]$. With a suitable identification, the map
$$  
\End_{C^*(\wh{U})}^0(PZ) \xhookrightarrow{\iota} C^*(P\calA_\phi P) 
$$
is just inclusion.

Now define the isomorphism $\zeta:C^*(P \calA_\phi P) \to \calT$ by
\begin{align*}
   &\zeta(P\wh{V}^n P) = (\wh{V}\otimes S)^n,  &&\zeta(P\wh{V}^{-n}P) = [(\wh{V}\otimes S)^*]^n
\end{align*}   
for $n\geq 0$ and
$$  
\zeta(\wh{U}^m) = \sum_{j=0}^\infty (\wh{V}^*)^j \wh{U}^m \wh{V}^j \otimes S^j (1-SS^*) (S^*)^j 
$$
and then extend accordingly. Then we have that the diagram
$$  
  \xymatrix{  0 \ar[r] & \calK\otimes C^*(\wh{U}) \ar[r] & \calT \ar[r] & \calA_\phi \ar[r] & 0 \\
              0 \ar[r] & \End_{C^*(\wh{U})}^0(P\calT) \ar[r] \ar[u]^{\cong}_{\mathrm{Ad}W} 
              & C^*(P\calA_\phi P) \ar[r] \ar[u]^{\cong}_\zeta & \calA_\phi \ar@{=}[u] \ar[r] & 0  }
$$
commutes, and so these extensions are unitarily equivalent. 
We summarise this Section by  the following.
\begin{prop}
The extension class representing the short exact sequence of 
Equation \eqref{eq:Toeplitz_extension_of_rotation_algebra} is the same as the 
class represented by the Kasparov module 
$\left(\calA_\phi,Z_{C^*(\wh{U})},N\right)$ in $KK^1(\calA_\phi,C^*(\wh{U}))$.
\end{prop}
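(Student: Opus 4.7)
The plan is to invoke the general principle from Section 7 of Kasparov's paper that an odd unbounded Kasparov module $(\calA_\phi, Z_{C^*(\wh{U})}, N)$ produces a canonical extension
$$ 0 \to \End_{C^*(\wh{U})}^{0}(PZ) \to C^*(P\calA_\phi P) \to \calA_\phi \to 0 $$
where $P=\chi_{[0,\infty)}(N)$, and then to identify this extension with the Toeplitz-like extension \eqref{eq:Toeplitz_extension_of_rotation_algebra} via an explicit unitary equivalence. Once the two extensions are shown to be isomorphic via a commuting diagram with identity on the right column and $*$-isomorphisms on the other columns, the equality of classes in $\mathrm{Ext}(\calA_\phi, C^*(\wh{U}))\cong KK^1(\calA_\phi, C^*(\wh{U}))$ is automatic.

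First I would verify that the map $W:Z\to \ell^2(\Z)\otimes C^*(\wh{U})$ sending $\wh{V}^{n_1-n_2}\wh{U}^m\otimes S^{n_1}(S^*)^{n_2}\mapsto e_{n_1-n_2}\otimes \wh{U}^m$ is well-defined as an adjointable unitary, using Proposition~\ref{prop:finite_powers_of_V_dont_make_a_difference_in_Psi_metric} to see that the expression depends only on $n_1-n_2$, and using Lemma~\ref{lemma:projections_relatively_orthogonal_in_vhat_direction} together with the formula for the inner product to check that $W$ preserves inner products on the generating elements. Since the basis vectors of $\ell^2(\Z)$ correspond to eigenvectors of $N$ with eigenvalue $n_1-n_2$, conjugation by $W$ sends $P$ to the projection onto $\ell^2(\N)\otimes C^*(\wh{U})$, and hence identifies $\End_{C^*(\wh{U})}^{0}(PZ)$ with $\calK[\ell^2(\N)]\otimes C^*(\wh{U})$, matching the kernel of the Toeplitz extension.

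Next I would check that the formulas defining $\zeta: C^*(P\calA_\phi P)\to \calT$ give a well-defined $*$-homomorphism. The images $(\wh{V}\otimes S)^n$ and the infinite sum $\sum_{j\geq 0}(\wh{V}^*)^j \wh{U}^m \wh{V}^j \otimes S^j(1-SS^*)(S^*)^j$ (convergent in the strict topology, as the projections $S^j(1-SS^*)(S^*)^j$ are mutually orthogonal) are elements of $\calT$ satisfying the correct commutation relations, so $\zeta$ extends by the universal property of $C^*(P\calA_\phi P)$. Injectivity and the fact that $\zeta$ surjects onto $\calT$ follow by noting that its image contains both a set of generators for $\psi[\calK\otimes C^*(\wh{U})]$ (obtained from commutators like $[P,\wh{V}^n]$, which $\zeta$ sends to elements supported in the ideal generated by $1\otimes P_{n=0}$) and a lift of every generator of the quotient $\calA_\phi$. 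Finally I would verify that the displayed square diagram commutes on generators; the right square commutes by construction since $\zeta$ is compatible with the quotient map onto $\calA_\phi$, and the left square commutes because $\mathrm{Ad}(W)$ and $\zeta$ agree on the rank-one endomorphisms given by the projections $\Phi_k$.

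The main obstacle is bookkeeping: one must be careful that $\zeta$ is truly an isomorphism (not merely a morphism) onto $\calT$, which requires verifying both that the image generates $\calT$ and that the relations in $C^*(P\calA_\phi P)$ are not stronger than those in $\calT$. Once the diagram is verified, naturality of the boundary map in $\mathrm{Ext}$ delivers the identification of KK-classes. No technical input beyond the construction in Kasparov's paper and the computations already established in Propositions~\ref{prop:repn_of_A_on_right_A_module}, \ref{prop:finite_powers_of_V_dont_make_a_difference_in_Psi_metric}, and \ref{prop:we_get_right_kas_mod_for_A} is required.
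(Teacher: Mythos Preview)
Your proposal is correct and follows essentially the same approach as the paper: invoke Kasparov's identification of the extension class of an odd Kasparov module with the Busby extension $0\to\End^0_{C^*(\wh{U})}(PZ)\to C^*(P\calA_\phi P)\to\calA_\phi\to 0$, then use the unitary $W$ and the isomorphism $\zeta$ to build a commuting diagram identifying this with the Toeplitz extension \eqref{eq:Toeplitz_extension_of_rotation_algebra}. The paper presents this argument in the discussion preceding the proposition rather than as a formal proof, and your write-up supplies somewhat more detail on why $W$ and $\zeta$ are well-defined isomorphisms, but the strategy is identical.
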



\section{The bulk-edge correspondence and the Kasparov product}

\subsection{Overview of the main result}
Once again recall the short exact sequence
$$     
0 \to C^*(\wh{U})\otimes \calK[\ell^2(\N)] \xrightarrow{\psi} \calT \to \calA_\phi \to 0. 
$$
The ideal is regarded as our boundary data, as we can consider it acting on 
$\ell^2(\Z\times\N)$ but with compact operators acting in the direction 
perpendicular to the boundary. The quotient $\calA_\phi$ describes a 
quantum Hall system in the absence of the boundary.

There is an obvious spectral triple in the work of Bellissard 
et al.~\cite{Bellissard94}\footnote{The authors of~\cite{Bellissard94} 
actually deal with Fredholm modules, 
but there is a very natural extension to the setting of spectral triples.} 
for the 
boundary-free quantum Hall system. We use the notation  
$\left(\calA_{-\phi},\ell^2(\Z^2)\oplus\ell^2(\Z^2),X\right)$ for this triple which represents a 
class in $KK^0(\calA_{-\phi},\C)$. Here we have
$X = \begin{pmatrix} 0 & X_1-iX_2 \\ X_1+iX_2 & 0 \end{pmatrix}$, 
where $X_1$ and $X_2$ are position (or, equivalently, number) operators on $\ell^2(\Z^2)$. 
We think of this as a `Dirac-type' operator.

We also have the natural spectral triple on $C^*(\wh{U})$ that 
gives us a class 
$\left[(C^*(\wh{U}),\ell^2(\Z)_\C, M)\right]\in KK^1(C(S^1),\C)\cong KK^1(C^*(\wh{U})\otimes\calK,\C)$ 
for $M$ the position/number operator on $\ell^2(\Z)$. Our idea is to use the Kasparov 
module that represents the Toeplitz extension to relate the bulk and boundary 
spectral triples via the internal Kasparov product. Namely, we claim that, under the map
$$  
KK^1(\calA_\phi, C(S^1)) \times KK^1(C(S^1), \C) \to KK^0(\calA_\phi,\C), 
$$
we have that
$$   
\left[( {\calA_\phi},Z_{C^*(\wh{U})}, N)\right] \hat{\otimes}_{C^*(\wh{U})} 
\left[(C^*(\wh{U}),\ell^2(\Z)_\C, M)\right] = -\,\left[({\calA_\phi},\ell^2(\Z^2)_\C, X,\Gamma)\right]. 
$$
Of course, our original boundary-free spectral triple is in $K^0(\calA_{-\phi})$, 
not $K^0(\calA_{\phi})$. 
By using the extra structure coming from the 
left-module $\left( \calA_\phi^\mathrm{op}, {}_{C^*(\wh{U})}\calT,N\right)$, 
we are able to resolve this discrepancy and obtain the Bellissard spectral 
triple from the product module up to an explicit unitary equivalence.

\subsection{The details}
\subsubsection{The boundary spectral triple and the product}
We have our module $\beta=\left( {\calA_\phi},Z_{C^*(\wh{U})}, N\right)$ 
giving rise to a class in $KK^1(\calA_\phi,C^*(\wh{U}))$. We now obtain 
our `boundary module' by considering the space $\ell^2(\Z)$ with action 
of $C^*(\wh{U})$ by translations; i.e, $(\wh{U}\lambda)(m) = \lambda(m-1)$. 
We have a natural spectral triple in this setting denoted by 
$\Delta =\left(C^*(\wh{U}), \ell^2(\Z), M\right)$, where 
$M:\Dom(M)\to \ell^2(\Z)$ is given by $M\lambda(m) = m\lambda(m)$. 
It is a simple exercise to show that $\left(C^*(\wh{U}), \ell^2(\Z), M\right)$ 
is indeed a spectral triple and therefore an odd, unbounded $C^*(\wh{U})$-$\C$ 
Kasparov module. This is also what we would expect for a boundary 
system as the operator $M$ becomes the Dirac operator on the circle 
if we switch to position space. Our goal is to take the internal Kasparov 
product over $C^*(\wh{U})$ and obtain a class in $KK^0(\calA_\phi,\C)$, 
which we then link to Bellisard's spectral triple modelling a boundaryless quantum Hall system.

Whilst computing the product $\beta \hat\otimes_{C^*(\wh{U})} \Delta$ is 
relatively straight-forward, we relegate the details to the appendix and state the result. 
\begin{lemma}
\label{lem:product}
The Kasparov product of the unbounded modules 
$\beta=\left( {\calA_\phi},Z_{C^*(\wh{U})}, N\right)$ and \\ 
$\Delta =\left(C^*(\wh{U}), \ell^2(\Z), M\right)$ is given by
$$  
\beta\hat\otimes_{C^*(\wh{U})} \Delta 
=-\,\left[ \left( \calA_\phi, 
\begin{pmatrix} Z \otimes_{C^*(\wh{U})} \ell^2(\Z) \\ Z \otimes_{C^*(\wh{U})} \ell^2(\Z) \end{pmatrix}_\C, 
\begin{pmatrix} 0 & 1\hat\otimes_\nabla M -iN\hat\otimes 1 
\\ 1\hat\otimes_\nabla M +iN\hat\otimes 1 & 0 \end{pmatrix} \right)\right],  
$$
where $\calA_\phi$ acts diagonally and 
$\nabla:\calZ\to\calZ\otimes_{\mathrm{poly}(\wh{U})}\Omega^1(\mathrm{poly}(\wh{U}))$ 
is a connection on a smooth submodule $\calZ$ of $Z$ (see the Appendix). The overall minus sign
means the negative of this class in $KK(\calA_\phi,\C)$.
\end{lemma}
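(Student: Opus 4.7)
The plan is to apply the constructive unbounded Kasparov product of~\cite{BMvS13, KL13, MeslandMonster}: choose a smooth submodule of $Z_{C^*(\wh{U})}$ together with a Hermitian connection compatible with $(\,\cdot\mid\cdot\,)_{C^*(\wh{U})}$, form the interior tensor product $Z\hat\otimes_{C^*(\wh{U})}\ell^2(\Z)$, and assemble the candidate product operator $N\hat\otimes 1+1\hat\otimes_\nabla M$. Because both $\beta$ and $\Delta$ are odd, the product lies in $KK^0$ and is most naturally presented on the doubled module, with two anticommuting odd operators encoded as an off-diagonal matrix. Choosing the off-diagonal signs as $\pm i$ yields the displayed operator in the lemma.

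First I would identify the smooth submodule $\calZ\subset Z$ as the linear span of the monomials $\wh{V}^{n_1-n_2}\wh{U}^m\otimes S^{n_1}(S^*)^{n_2}$. This is a finitely generated projective $\mathrm{poly}(\wh{U})$-module in a natural way (via the identification $W$ of Section 2.4), so there is a Grassmann connection $\nabla\colon\calZ\to\calZ\otimes_{\mathrm{poly}(\wh{U})}\Omega^1(\mathrm{poly}(\wh{U}))$; its action under the representation on $\ell^2(\Z)$ provided by $\Delta$ is determined by $\nabla(\wh{U}^m)\sim\wh{U}^m\otimes dM$, so $1\hat\otimes_\nabla M$ essentially differentiates along the $\wh{U}$-direction. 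Next I would verify the general axioms for the product module: self-adjointness and regularity of the product operator on the natural sum domain, boundedness of the graded commutators with elements of the Schwartz-type dense subalgebra of $\calA_\phi$ (which reduces via Proposition~\ref{prop:we_get_right_kas_mod_for_A} and the corresponding statement for $M$ to an already-established check), and compactness of $a(1+D^2)^{-1/2}$.

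The core step is then checking Kucerovsky's matching conditions so that the candidate operator truly represents the internal product: the domain condition $\Dom(D)\subset\Dom(N\hat\otimes 1)$, boundedness of the cross-term $[N\hat\otimes 1,\,1\hat\otimes_\nabla M]$ on a dense subspace, and the positivity/semibounded estimate on graded commutators with the left action. Here the construction is particularly tame because $N$ diagonalises $Z$ along the $\wh{V}$-weight decomposition $Z=\bigoplus_k\Phi_k Z$ of Lemma~\ref{lemma:projections_relatively_orthogonal_in_vhat_direction}, while $M$ diagonalises $\ell^2(\Z)$ along the $\wh{U}$-weight decomposition; these decompositions are compatible with the tensor product, and the only cross-interaction comes from $\wh{U}\wh{V}=e^{2\pi i\phi}\wh{V}\wh{U}$, which contributes only bounded phases. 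This is the step I expect to be the main (though not conceptual) obstacle, since the routine but lengthy bookkeeping of cross-terms is precisely what the authors defer to the appendix.

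Finally, the overall minus sign comes from orientation. The product of two odd unbounded classes depends on the ordering of generators and on the choice of signs in the off-diagonal presentation; with the conventions fixed by writing the matrix with $-iN\hat\otimes 1$ in the upper-right and $+iN\hat\otimes 1$ in the lower-left, the resulting class compares to a Dirac-type triple of the form $\begin{pmatrix}0 & X_1-iX_2\\ X_1+iX_2 & 0\end{pmatrix}$ acting with opposite orientation to Bellissard's. This sign is then recorded in the statement.
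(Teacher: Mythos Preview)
Your approach is essentially the same as the paper's: invoke the constructive unbounded product of \cite{KL13} (their Theorem~7.5), define the connection $\nabla$ on the span of monomials in $Z$, and assemble the off-diagonal operator on the doubled module. The paper simply cites \cite[Theorem~7.5]{KL13} and remarks that the hypotheses are ``of the simplest kind'' and omitted, whereas you propose to verify Kucerovsky's conditions explicitly; either way the substance is the same.

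Two small points. First, $\calZ$ is not finitely generated projective over $\mathrm{poly}(\wh{U})$---it is a direct sum of rank-one free pieces indexed by the $\wh{V}$-weight $k\in\Z$ (via the unitary $W$ of Section~2.4)---so you should define $\nabla$ directly on monomials as the paper does rather than appeal to a Grassmann connection. Second, your account of the minus sign is vaguer than the paper's. The paper's argument is concrete: the output of \cite[Theorem~7.5]{KL13} has $N\hat\otimes 1$ on the real axis and $1\hat\otimes_\nabla M$ on the imaginary axis, i.e.\ off-diagonal entries $N\hat\otimes 1\mp i\,1\hat\otimes_\nabla M$. Conjugating by $\begin{pmatrix}0 & i\\ 1 & 0\end{pmatrix}$ swaps the roles to match the displayed form $1\hat\otimes_\nabla M\mp iN\hat\otimes 1$, but flips the grading to $\begin{pmatrix}-1 & 0\\ 0 & 1\end{pmatrix}$; restoring the standard grading is exactly the negative in $KK(\calA_\phi,\C)$. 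You should make this explicit rather than attribute the sign to unspecified orientation conventions.
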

Our task now is to relate the product spectral triple of  Lemma \ref{lem:product} 
to the boundary-free quantum Hall system. 

\subsubsection{Equivalence of the product triple and boundary-free triple}
Recall once again~\cite{Bellissard94, CM96} our `bulk' spectral triple 
$$ 
\left( \calA_{-\phi}, \begin{pmatrix} \ell^2(\Z^2) \\ \ell^2(\Z^2) \end{pmatrix}_\C, 
\begin{pmatrix} 0 & X_1-iX_2 \\ X_1+iX_2 & 0 \end{pmatrix} \right), 
$$
where $(X_1\pm iX_2)\lambda(m,n) = (m\pm i n)\lambda(m,n)$ for 
$\lambda \in\Dom(M\pm iN)\subset\ell^2(\Z^2)$ and 
$\calA_{-\phi} \cong C^*(U,V)$ has the representation generated by
\begin{align*}
  &(U\lambda)(m,n) = e^{-2\pi i\phi n}\lambda(m-1,n),  &&(V\lambda)(m,n) = \lambda(m,n-1),
\end{align*}
with $H=U+U^*+V+V^*$ and $\lambda\in\ell^2(\Z^2)$. 
Our quantum Hall system without boundary also comes with a 
representation of $\calA_\phi\cong C^*(\wh{U},\wh{V})$ by 
magnetic translations such that the two representations commute. 
To put this another way (cf \cite{CM96}), let $\sigma(k,k')=e^{2\pi i\phi k_1'k_2}$ be 
a group 2-cocycle for $\Z^2$. Then $C^*(U,V)$ gives a right 
$\sigma$-representation of $\Z^2$ and there is a corresponding 
left $\ol{\sigma}$-representation of $\Z^2$ by $C^*(\wh{U},\wh{V})$ 
which commutes with the right representation. Because 
$C^*(U,V)\cong \calA_{-\phi} \cong \calA_{\phi}^{\mathrm{op}}$, we obtain the following.
\begin{prop}  
\label{prop:full_qH_bulk_spec_trip}
The data $\left( \calA_\phi \otimes \calA_{\phi}^\mathrm{op}, 
\ell^2(\Z^2)\oplus \ell^2(\Z^2), 
\begin{pmatrix}  0 & X_1-iX_2 \\ X_1+iX_2 & 0 \end{pmatrix},
\gamma=\begin{pmatrix} 1 & 0 \\ 0 & -1 \end{pmatrix}\right)$ defines an even spectral triple. 
\end{prop}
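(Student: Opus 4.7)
My plan is to verify, in order, the four defining data of an even spectral triple: a bounded $*$-representation of $\calA_\phi \otimes \calA_\phi^{\mathrm{op}}$ compatible with the grading; self-adjointness of $D$; compactness of $(1+D^2)^{-1/2}$; and boundedness of $[D,a]$ on a dense $*$-subalgebra. The grading relation $\gamma D + D\gamma = 0$ is immediate from the off-diagonal form of $D$.

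For the representation, I would invoke the fact recalled just before the statement: the magnetic-translation representation of $\calA_\phi \cong C^*(\wh{U},\wh{V})$ on $\ell^2(\Z^2)$ commutes with the representation of $\calA_{-\phi} \cong \calA_\phi^{\mathrm{op}} = C^*(U,V)$ on the same space. Since the rotation algebras are nuclear, two commuting $*$-representations combine into a bounded $*$-representation of the tensor product $\calA_\phi \otimes \calA_\phi^{\mathrm{op}}$ on $\ell^2(\Z^2)$, which I extend diagonally to $\ell^2(\Z^2)\oplus\ell^2(\Z^2)$. The resulting action is scalar with respect to the $\Z_2$-grading, so it commutes with $\gamma=\mathrm{diag}(1,-1)$.

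For $D$ and its spectral properties: the off-diagonal entries $X_1 \pm iX_2$ are normal multiplication operators on $\ell^2(\Z^2)$ in the canonical basis $\{\delta_{(m,n)}\}_{(m,n)\in\Z^2}$ with eigenvalues $m \pm in$. This implies $D$ is essentially self-adjoint on $c_c(\Z^2)\oplus c_c(\Z^2)$. Moreover $D^2 = (X_1^2+X_2^2)\oplus(X_1^2+X_2^2)$; since the eigenvalues $m^2+n^2$ tend to infinity with finite multiplicities, $(1+D^2)^{-1/2}$ is compact on $\ell^2(\Z^2)\oplus\ell^2(\Z^2)$.

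The bounded-commutator condition reduces to a direct calculation on generators. Since $X_1$ and $X_2$ are diagonal, the magnetic phases $e^{-2\pi i\phi m}$ and $e^{-2\pi i\phi n}$ in the definitions of $\wh{V}$ and $U$ do not contribute, and one finds
\[
  [X_1,\wh{U}]=\wh{U},\quad [X_2,\wh{V}]=\wh{V},\quad [X_1,U]=U,\quad [X_2,V]=V,
\]
with all remaining pairs commuting. Hence for any monomial $\wh{U}^\alpha \wh{V}^\beta \otimes (U^\gamma V^\delta)^{\mathrm{op}}$ the commutator with $X_1 \pm iX_2$ multiplies it by the scalar $(\alpha+\gamma) \pm i(\beta+\delta)$. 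Taking the dense $*$-subalgebra to be the algebraic tensor product of the two Schwartz rotation algebras (as in Proposition \ref{prop:we_get_right_kas_mod_for_A}), the rapid decay of the coefficients absorbs this linear growth, so $[D,a]$ is a bounded off-diagonal operator for every such $a$. The main (though minor) technical point I foresee is confirming that this algebraic tensor product is indeed a dense $*$-subalgebra of $\calA_\phi \otimes \calA_\phi^{\mathrm{op}}$ on which the commutator estimates hold uniformly; this follows from nuclearity together with the fact that each Schwartz rotation algebra is a Fréchet pre-$C^*$-algebra dense in its $C^*$-completion.
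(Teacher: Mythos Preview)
Your proposal is correct and follows the same approach as the paper. The paper's proof is a single sentence---``The only thing we need to check is that our Dirac-type operator has bounded commutators with a smooth subalgebra of $C^*(\wh{U},\wh{V})$, which is an easy computation''---taking the $\calA_{-\phi}$ side and the resolvent/self-adjointness as already known from Bellissard's construction; you simply spell out all four axioms in detail, with the same shift-operator commutator calculation at the core.
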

\begin{proof}
The only thing we need to check is that our Dirac-type operator has bounded 
commutators with a smooth subalgebra of  $C^*(\wh{U},\wh{V})$, which is an easy computation.
\end{proof}

Our aim is to reproduce this spectral triple via an explicit unitary equivalence 
with the module we have constructed via the Kasparov product. We state our central result.

\begin{thm} 
\label{thm:unitary_equiv_between_product_module_and_bulk_spec_trip}
Let $\varrho: Z_{C^*(\wh{U})}\otimes_{C^*(\wh{U})} \ell^2(\Z)\to \ell^2(\Z^2)$ 
be the map
$$  
\varrho\left(\wh{V}^{n_1-n_2}\wh{U}^m\otimes S^{n_1}(S^*)^{n_2}\otimes_{C^*(\wh{U})} e_j\right) 
= e^{-2\pi i\phi(j+m)(n_1-n_2)}e_{j+m,n_1-n_2}, 
$$
where $e_j$ and $e_{j,k}$ are the standard basis elements of 
$\ell^2(\Z)$ and $\ell^2(\Z^2)$ respectively. Then there is a representation of
$\calA_\phi\otimes \calA_\phi^\mathrm{op}$ on $Z\otimes_{C^*(\wh{U})}\ell^2(\Z)$
such that $\varrho$ gives a 
unitary equivalence between the  spectral triple 
$$ 
\left(\calA_\phi\otimes\calA_\phi^\mathrm{op},
\begin{pmatrix}Z \hat\otimes_{C^*(\wh{U})} \ell^2(\Z)\\ Z \hat\otimes_{C^*(\wh{U})} \ell^2(\Z)\end{pmatrix},
\begin{pmatrix} 0 & 1\hat\otimes_\nabla M -iN\hat\otimes 1 \\ 
1\hat\otimes_\nabla M +iN\hat\otimes 1 & 0 \end{pmatrix}\right)
$$ 
arising from the product triple of Lemma \ref{lem:product} and the bulk quantum Hall  triple in Proposition \ref{prop:full_qH_bulk_spec_trip}.
\end{thm}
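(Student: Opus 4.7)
The plan is to verify unitary equivalence through three direct checks on the monomial basis $z_{k,m}\otimes e_j$, where $z_{k,m}$ denotes $\wh{V}^{n_1-n_2}\wh{U}^m \otimes S^{n_1}(S^*)^{n_2}$ (well-defined for $k = n_1 - n_2$ by Proposition~\ref{prop:finite_powers_of_V_dont_make_a_difference_in_Psi_metric}).

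\emph{Step 1: $\varrho$ is a unitary.} Well-definedness of $\varrho$ on the balanced tensor product over $C^*(\wh{U})$ reduces to the identity $\varrho(z_{k, m+\alpha}\otimes e_j) = \varrho(z_{k, m}\otimes e_{j+\alpha})$, which is a short index computation. Using the $C^*(\wh{U})$-valued inner product from Proposition~\ref{prop:repn_of_A_on_right_A_module}, I obtain $\langle z_{k_1, m_1}\otimes e_{j_1}, z_{k_2, m_2}\otimes e_{j_2}\rangle = \delta_{k_1, k_2}\delta_{j_1+m_1, j_2+m_2}$, which matches the $\ell^2(\Z^2)$ inner product of the $\varrho$-images (the twist phases $e^{-2\pi i \phi(j+m)k}$ cancel whenever the indices agree). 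Surjectivity follows since $\varrho(z_{b, a}\otimes e_0)$ is a unit scalar multiple of $e_{a, b}$.

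\emph{Step 2: the $\calA_\phi\otimes\calA_\phi^{\mathrm{op}}$-representation.} The $\calA_\phi$-action from Proposition~\ref{prop:repn_of_A_on_right_A_module} extends diagonally. On generators, $\wh{U}\cdot z_{k, m} = e^{2\pi i \phi k} z_{k, m+1}$ and $\wh{V}\cdot z_{k, m} = z_{k+1, m}$, and the generator phases cancel against the $\varrho$-twist, reproducing the Bellissard formulas $\wh{U}e_{a, b} = e_{a+1, b}$ and $\wh{V} e_{a, b} = e^{-2\pi i \phi a}e_{a, b+1}$. The $\calA_\phi^{\mathrm{op}}$-action from Proposition~\ref{prop:adjointable_rep_of_Aop_on_left_module} does \emph{not} descend naively, because $V(z\cdot\wh{U}^\alpha) = e^{2\pi i \phi \alpha} V(z)\cdot\wh{U}^\alpha$; I fix this by taking $U\mapsto U\otimes 1$ and $V\mapsto V\otimes V'$ with $V' e_j = e^{2\pi i \phi j} e_j$. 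The twist cancels the $\wh{U}^\alpha$-discrepancy so the action is well-defined on the balanced tensor product, the relation $UV = e^{-2\pi i \phi}VU$ persists, and commutativity with the $\calA_\phi$-action is a generator-wise check. Intertwining this with the $\calA_{-\phi}$-action of Proposition~\ref{prop:full_qH_bulk_spec_trip} is then direct.

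\emph{Step 3: the Dirac operator.} By construction of $N$, $(N\hat\otimes 1)(z_{k, m}\otimes e_j) = k\, z_{k, m}\otimes e_j$, so $\varrho(N\hat\otimes 1)\varrho^{-1} = X_2$. For $1\hat\otimes_\nabla M$ I use the connection $\nabla$ on the smooth submodule $\calZ$ specified in the Appendix, designed so that $(1\hat\otimes_\nabla M)(z_{k, m}\otimes e_j) = (j + m)\, z_{k, m}\otimes e_j$: the $j$-contribution comes from $M$ acting on $\ell^2(\Z)$ and the $m$-contribution from the connection applied to the $\wh{U}^m$ factor (reflecting the derivation $[M, \wh{U}] = \wh{U}$, which extracts the power $m$ from $\wh{U}^m$). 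Under $\varrho$ this is precisely $X_1 e_{j+m, k}$, so $\varrho(1\hat\otimes_\nabla M \mp iN\hat\otimes 1)\varrho^{-1} = X_1 \mp iX_2$. The $\Z_2$-gradings match trivially, and since $\varrho$ bijects the monomial core onto the canonical basis of $\ell^2(\Z^2)$, the equivalence is at the unbounded level. The main obstacle is this last step: unpacking the Appendix's connection $\nabla$ and verifying that its Leibniz correction on $z_{k, m}$ produces exactly the integer $m$; the $\calA_\phi^{\mathrm{op}}$-twist in Step 2, while conceptually essential, is routine once the correct form is guessed from the intertwining requirement.
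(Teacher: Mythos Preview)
Your proof is correct and follows the same three-step verification on monomial elements as the paper: check that $\varrho$ is a well-defined isometric isomorphism, check intertwining of the algebra actions, and check intertwining of the Dirac operator. The one substantive difference is your treatment of the $\calA_\phi^{\mathrm{op}}$-action. The paper simply \emph{defines} it by pullback along $\varrho$ (so intertwining is automatic) and then remarks that it ``comes from the left action of $\calA_\phi^{\mathrm{op}}$ on ${}_{C^*(\wh{U})}Z$'', writing down the formula $U^\alpha V^\beta\cdot(z_{k,0}\otimes e_j)=e^{2\pi i\phi\beta j}z_{k+\beta,0}\otimes e_{j+\alpha}$ without explaining the phase. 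You instead construct the action intrinsically, correctly observing that the naive $V\otimes 1$ fails to descend to the balanced tensor product because $V(z\cdot\wh U^\alpha)=e^{2\pi i\phi\alpha}(Vz)\cdot\wh U^\alpha$, and repairing this with the twist $V\mapsto V\otimes V'$, $V'e_j=e^{2\pi i\phi j}e_j$. This reproduces exactly the paper's phase and makes transparent why it must be there; your approach is slightly more informative, while the paper's pullback definition is quicker. Your Step~3 reading of the connection is also correct: the Appendix's formula gives $(1\hat\otimes_\nabla M)(z_{k,m}\otimes e_j)=z_{k,0}\otimes M\wh U^m e_j=(j+m)\,z_{k,m}\otimes e_j$, which is precisely what you claim.
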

\begin{proof}
We first check that, by moving elements of $C^*(\wh{U})$ across the internal tensor product, 
\begin{align*}
   \wh{V}^{n_1-n_2}\wh{U}^m\otimes S^{n_1}(S^*)^{n_2}\otimes_{C^*(\wh{U})} e_j 
   &= (\wh{V}^{n_1-n_2}\otimes S^{n_1}(S^*)^{n_2})\cdot \wh{U}^m \otimes_{C^*(\wh{U})} e_j \\
     &= \wh{V}^{n_1-n_2}\otimes S^{n_1}(S^*)^{n_2} \otimes_{C^*(\wh{U})} \wh{U}^m \cdot e_j \\
     &= \wh{V}^{n_1-n_2}\otimes S^{n_1}(S^*)^{n_2} \otimes_{C^*(\wh{U})} e_{j+m},
\end{align*}
we see that the map $\varrho$ respects the inner-products on $Z \hat\otimes_{C^*(\wh{U})} \ell^2(\Z)$ and on $\ell^2(\Z^2)$.
Hence $\varrho$ is an 
isometric isomorphism between Hilbert spaces.

Next we need to define a commuting representation of $\calA_\phi^\mathrm{op}$ 
on our product module. We can do this by pulling back the representation of 
$C^*(U,V)$ on $\ell^2(\Z^2)$ via the isomorphism $\varrho$. Alternatively, 
the same representation comes from the left action of $\calA_\phi^\mathrm{op}$ on ${}_{C^*(\wh{U})}Z$, the
module we constructed in Section \ref{subsec:left_module_over_Aop}. 
We first note that generating elements of 
$Z_{C^*(\wh{U})}\otimes_{C^*(\wh{U})} \ell^2(\Z)$ can be written as 
$\wh{V}^{n_1-n_2}\otimes S^{n_1}(S^*)^{n_2}\otimes e_j$ for some 
$j\in\Z$ and $n_1,n_2\in\N$. Then
$$  
{U}^\alpha {V}^\beta \cdot \left( \wh{V}^{n_1-n_2}\otimes S^{n_1}(S^*)^{n_2} \otimes e_j\right) 
= e^{2\pi i\phi\beta j}\wh{V}^{n_1-n_2+\beta} \otimes S^{n_1+\beta}(S^*)^{n_2} \otimes e_{j+\alpha} 
$$
for $\beta\geq 0$. A similar formula but replacing $S^{n_1+\beta}(S^*)^{n_2}$ with 
$S^{n_1}(S^*)^{n_2+|\beta|}$ gives the action for $\beta<0$. 
This left-action of $\calA_\phi^\mathrm{op}$ is compatible with the isomorphism, that is, 
$$  
\varrho\!\left[ {U}^\alpha {V}^\beta \cdot 
\left( \wh{V}^{n_1-n_2}\wh{U}^m\otimes S^{n_1}(S^*)^{n_2} \otimes e_j\right) \right] 
= {U}^\alpha {V}^\beta \cdot 
\varrho\!\left( \wh{V}^{n_1-n_2}\wh{U}^m\otimes S^{n_1}(S^*)^{n_2} \otimes e_j\right) 
$$
and this relation extends appropriately.

What remains to check is that the map $\varrho$ is compatible 
with the representation of $\calA_\phi$ and the Dirac-type operator. That is, we need to show that
\begin{align*}
   \varrho\!\left[ \wh{U}^\alpha \wh{V}^\beta \cdot 
   \left( \wh{V}^{n_1-n_2}\otimes S^{n_1}(S^*)^{n_2} \otimes e_j\right) \right] 
   &= \wh{U}^\alpha \wh{V}^\beta \cdot 
   \varrho\!\left( \wh{V}^{n_1-n_2}\otimes S^{n_1}(S^*)^{n_2} \otimes e_j\right), \\
  \varrho\!\left[(1\hat\otimes_\nabla M \pm iN\hat\otimes 1)
  \left(\wh{V}^{n_1-n_2}\otimes S^{n_1}(S^*)^{n_2} \otimes e_j\right)\right] 
  &= (X_1\pm iX_2)\varrho\!\left( \wh{V}^{n_1-n_2}\otimes S^{n_1}(S^*)^{n_2} \otimes e_j\right).
\end{align*}
For the first claim, more computations give that, for $\beta\geq 0$,
\begin{align*}
  \varrho\!\left[ \wh{U}^\alpha \wh{V}^\beta \cdot 
  \left( \wh{V}^{n_1-n_2}\otimes S^{n_1}(S^*)^{n_2} \otimes e_j\right) \right] 
  &= \varrho\!\left( e^{2\pi i\phi\alpha(\beta+n_1-n_2)} \wh{V}^{n_1-n_2+\beta}
  \otimes S^{n_1+\beta}(S^*)^{n_2}\otimes e_{j+\alpha}\right) \\
    &= e^{2\pi i\phi\alpha(\beta+n_1-n_2)}
    e^{-2\pi i\phi(j+\alpha)(\beta+n_1-n_2)}e_{j+\alpha,n_1-n_2+\beta} \\
    &= e^{-2\pi i\phi\phi j(\beta+n_1-n_2)}e_{j+\alpha,n_1-n_2+\beta}
\end{align*}
and 
\begin{align*}   
\wh{U}^\alpha \wh{V}^\beta \cdot \varrho\!\left( \wh{V}^{n_1-n_2}\otimes S^{n_1}(S^*)^{n_2} 
\otimes e_j\right) &= \wh{U}^\alpha \wh{V}^\beta e^{-2\pi i\phi j(n_1-n_2)} e_{j,n_1-n_2} \\
  &= e^{-2\pi i\phi j\beta} e^{-2\pi i\phi j(n_1-n_2)} e_{j+\alpha,n_1-n_2+\beta}.
\end{align*}
Again, the case for $\beta<0$ is basically identical. Because the result holds on generating elements, which are represented as shift operators, the result extends to the whole algebra 
and space. For the second claim, we once more check the result on 
spanning elements. We recall from the appendix that
\begin{align*}
  (1\hat\otimes_\nabla M)\left(\wh{V}^{n_1-n_2}\otimes S^{n_1}(S^*)^{n_2} \otimes e_j\right) 
  &= \wh{V}^{n_1-n_2}\otimes S^{n_1}(S^*)^{n_2} \otimes M\wh{U}^0 e_j  \\
  &= j \left(\wh{V}^{n_1-n_2}\otimes S^{n_1}(S^*)^{n_2} \otimes e_j\right).
\end{align*}
Therefore,
\begin{align*}
  \varrho\!\left[(1\hat\otimes_\nabla M \pm iN\hat\otimes 1)\left(\wh{V}^{n_1-n_2}
  \otimes S^{n_1}(S^*)^{n_2} \otimes e_j\right)\right] 
  &= (j\pm i(n_1-n_2))\varrho\!\left(\wh{V}^{n_1-n_2}\otimes S^{n_1}(S^*)^{n_2} \otimes e_j\right) \\
  &= (j\pm i(n_1-n_2))e^{-2\pi i\phi j(n_1-n_2)}e_{j,n_1-n_2} \\
  &= (X_1\pm iX_2) \varrho\!\left(\wh{V}^{n_1-n_2}\otimes S^{n_1}(S^*)^{n_2} \otimes e_m\right)
\end{align*}
and the main result follows by extending in the standard way.
\end{proof}

\begin{remark}[Factorisation and Poincar\'{e} duality]
In the proof of Theorem \ref{thm:unitary_equiv_between_product_module_and_bulk_spec_trip} the bimodule structure of $Z$ could be used to obtain the left-action of $\calA_\phi^\mathrm{op}$ on the product module.
An important 
observation is that we can either take the Kasparov product of 
$\left(\calA_\phi,Z_{C^*(\wh{U})},N\right)$ or 
$\left(\calA_\phi^\mathrm{op},{}_{C^*(\wh{U})}Z,N\right)$ 
with our boundary module \emph{and the resulting module is the same}. 
Hence we pick up an extra representation on our product module, 
which is necessary in order to completely link up the product module to the bulk spectral triple. 
The deeper meaning behind this extra structure is related to Poincar\'{e} duality for 
$\calA_\phi$: see~\cite{ConnesGrav} for more information.
\end{remark}

By setting up a unitary equivalence of spectral triples, we can 
conclude that the $K$-homological data presented in Bellissard's 
spectral triple is the same as that presented by the product module 
we have constructed. The unitary equivalence is of course much 
stronger than just stable homotopy equivalence on the level of $K$-homology.


\subsection{Pairings with $K$-Theory and the edge conductance}

We know abstractly that the $KK^1$ class defined by the Kasparov module 
$(\calA_\phi,Z_{C^*(\wh{U})},N)$ represents the boundary map in 
$K$-homology~\cite[Section 9]{Kasparov80}.  We examine this more closely by considering the pairings related to the quantisation of the Hall conductance.


We recall that the bulk spectral triple $(\calA_\phi,\ell^2(\Z^2)\oplus\ell^2(\Z^2),X,\gamma)$ 
pairs with elements in $K_0(\calA_\phi)\cong \Z[1]\oplus\Z[p_\phi]$, where 
$p_\phi$ is the Powers-Rieffel projection. For simplicity, we denote the 
corresponding $K$-homology class of our spectral triple by $[X]$, 
where we know that $[X] = [\beta]\hat\otimes_{C^*{(\wh{U})}}[\Delta]$. 
Now, $[X]$ pairs non-trivially with $[p_F]$, the Fermi projection, to 
give the Hall conductance up to a factor of $e^2/h$. Hence we have that
$$    
\sigma_H = \frac{e^2}{h}\left( [p_F] \hat\otimes_{\calA_\phi} [X] \right) 
= -\frac{e^2}{h}\left( [p_{F}] \hat\otimes_{\calA_\phi} 
\left([\beta]\hat\otimes_{C^*{(\wh{U})}}[\Delta] \right)\right),
$$
where the minus sign arises from Lemma \ref{lem:product}.
We can now use the associativity of the Kasparov product to rewrite this equation as 
$$ 
[p_{F}] \hat\otimes_{\calA_\phi} \left([\beta]\hat\otimes_{C^*{(\wh{U})}}[\Delta] \right) 
= \left([p_{F}] \hat\otimes_{\calA_\phi} [\beta]\right)\hat\otimes_{C^*{(\wh{U})}}[\Delta]. 
$$
We see that this new product 
$[p_F]\hat\otimes_{\calA_\phi} [\beta]$ is in $ KK^1(\C,C^*(\wh{U}))\cong K_1(C^*(\wh{U})) \cong \Z$,
where the last group has generator $\wh{U}$. 
So $[p_F]\hat\otimes_{\calA_\phi} [\beta]$ is represented by 
$\wh{U}^m \in C^*(\wh{U})$ for some $m\in\Z$ and we are now taking an 
odd index pairing. 

Next we note that the map
$$  
K_1(C^*(\wh{U}))\times K^1(C^*(\wh{U}))\to \Z \quad \mbox{where }
\left([p_{F}] \hat\otimes_{\calA_\phi} [\beta]\right)\times[\Delta]\mapsto 
\left([p_{F}] \hat\otimes_{\calA_\phi} [\beta]\right)\hat\otimes_{C^*{(\wh{U})}}[\Delta] 
$$
depends only on our boundary data, so this pairing is the mathematical 
formulation of the so-called edge conductance which, as we have seen, is the same as 
our bulk Hall conductance up to sign. 

Now, our definition of the edge conductance 
is purely mathematical, but one can see that the unitaries and spectral triples being used come quite naturally from considering the algebra $C^*(\wh{U})$ acting on $\ell^2(\Z)$, 
which is exactly what we would consider as a `boundary system' in 
the discrete picture. Hence our approach to the edge conductance is 
physically reasonable. Furthermore, the computation of the edge 
conductance boils down to computing 
$\mathrm{Index}\!\left(\Pi\wh{U}^m\Pi\right)=-m$ for $\Pi:\ell^2(\Z)\to\ell^2(\N)$, 
which is a much easier computation than $ [p_F]\hat\otimes_{\calA_\phi}[X]$. 


\section*{Appendix: Computing the odd Kasparov product}

It is proved in \cite[Theorem 7.5]{KL13} that the $KK$-class of the product
$$
\left[\left( {\calA_\phi},Z_{C^*(\wh{U})}, N\right)\right]\otimes_{C^*(\wh{U})}
\left[\left(C^*(\wh{U}), \ell^2(\Z), M\right)\right]
$$
is represented by
$$
\left( \calA_\phi, 
\begin{pmatrix} Z \otimes_{C^*(\wh{U})} \ell^2(\Z) \\ Z \otimes_{C^*(\wh{U})} \ell^2(\Z) \end{pmatrix}_\C, 
\begin{pmatrix} 0 & N\hat\otimes 1 -i 1\hat\otimes_\nabla M
\\ N\hat\otimes 1+i1\hat\otimes_\nabla M & 0 \end{pmatrix} \right).
$$ 
There are several conditions to check in order to apply \cite[Theorem 7.5]{KL13}, but the 
product we are taking turns out to be of the simplest kind, and we omit these simple checks.
Here $\calA_\phi$ acts diagonally on column vectors, and the grading is 
$\begin{pmatrix} 1 & 0\\ 0 & -1\end{pmatrix}$. To define $1\otimes_\nabla M$,
we let 
$\calZ_{C^*(\wh{U})}$ be the submodule of $Z$ given by finite 
sums of elements $\wh{V}^{n_1-n_2}\wh{U}^m\otimes S^{n_1}(S^*)^{n_2}$ and take the connection
$$  
\nabla :\calZ \to \calZ \otimes_{\mathrm{poly}(\wh{U})} \Omega^1(\mathrm{poly}(\wh{U})) 
$$
given by
$$  
\nabla\left(\sum_{n_1,n_2,m} z_{n_1,n_2,m}\right) 
= \sum_{n_1,n_2}z_{n_1,n_2,0}\otimes \delta(\wh{U}^m), 
$$
where $\delta$ is the universal derivation, and we represent $1$-forms on $\ell^2(\Z)$ via
$$  
\tilde{\pi}\left(a_0\delta(a_1)\right)\lambda = a_0[M,a_1]\lambda  
$$
for $\lambda\in\ell^2(\Z)$. We define
$$  
(1\otimes_\nabla M)(z\otimes\lambda) 
:= (z\otimes M\lambda) + (1\otimes \tilde{\pi})\circ(\nabla\otimes 1)(x\otimes \lambda). 
$$
The need to use a connection to correct the naive formula $1\otimes M$ is because 
$1\otimes M$ is not well-defined on the balanced tensor product. Computing yields that
\begin{align*}
   (1\otimes_\nabla M)\left(\sum_{n_1,n_2,\beta}z_{n_1,n_2,\beta}\otimes \lambda\right) 
   &= \sum_{n_1,n_2}z_{n_1,n_2,0}\otimes \wh{U}^\beta M \lambda 
   + \sum_{n_1,n_2}z_{n_1,n_2}\otimes [M,\wh{U}^\beta]\lambda \\
   &= \sum_{n_1,n_2}z_n \otimes M\wh{U}^\beta \lambda.
\end{align*}

Now conjugating the representation, operator and grading by  
$\begin{pmatrix} 0 & i\\ 1 & 0\end{pmatrix}$ yields the unitarily equivalent spectral triple
$$
\left( \calA_\phi, 
\begin{pmatrix} Z \otimes_{C^*(\wh{U})} \ell^2(\Z) \\ Z \otimes_{C^*(\wh{U})} \ell^2(\Z) \end{pmatrix}_\C, 
\begin{pmatrix} 0 &  -( 1\hat\otimes_\nabla M-iN\hat\otimes 1)
\\-(1\hat\otimes_\nabla M+i N\hat\otimes 1) & 0 \end{pmatrix} \right)
$$
with grading $\begin{pmatrix} -1 & 0\\ 0 & 1\end{pmatrix}$. In turn, the $KK$-class of this
spectral triple is
given by
$$
-\,\left[\left( \calA_\phi, 
\begin{pmatrix} Z \otimes_{C^*(\wh{U})} \ell^2(\Z) \\ Z \otimes_{C^*(\wh{U})} \ell^2(\Z) \end{pmatrix}_\C, 
\begin{pmatrix} 0 &   1\hat\otimes_\nabla M-iN\hat\otimes 1
\\1\hat\otimes_\nabla M+i N\hat\otimes 1 & 0 \end{pmatrix} \right)\right]
$$
with grading $\begin{pmatrix} 1 & 0\\ 0 & -1\end{pmatrix}$.

\end{document}